\documentclass[11pt]{article}

\usepackage{amsmath,amssymb,amsfonts}
\usepackage{fullpage}
\usepackage{url}
\usepackage{multirow}

\newtheorem{theorem}{Theorem}
\newtheorem{lemma}{Lemma}
\newtheorem{proposition}{Proposition}
\newtheorem{corollary}{Corollary}
\newtheorem{definition}{Definition}

\newtheorem{remark}{Remark}

\newtheorem{hypothesis}{Hypothesis}


\usepackage{float}

\usepackage{hyperref}
\hypersetup{
    unicode=false,          
    pdftoolbar=true,        
    pdfmenubar=true,        
    pdffitwindow=true,      
    pdftitle={On the Certification of the Restricted Isometry Property},    
    pdfauthor={Pascal Koiran, Anastasios Zouzias},     
    pdfsubject={},   
    pdfcreator={Pascal Koiran},   
    pdfproducer={}, 
    pdfkeywords={compressed sensing, restricted isometry property, basis pursuit}, 
    pdfnewwindow=true,      
    colorlinks=false,       
    linkcolor=red,          
    citecolor=green,        
    filecolor=magenta,      
    urlcolor=cyan          
}

\usepackage{algorithm}
\usepackage{algpseudocode}

\makeatletter
\def\@yproof[#1]{\@proof{ #1}}
\def\@proof#1{\begin{trivlist}\item[]{\em Proof#1.}}
\newenvironment{proof}{\@ifnextchar[{\@yproof}{\@proof{} 
}}{~$\Box$\end{trivlist}}
\makeatother

\def\const{\gamma}
\newcommand{\ignore}[1]{}
\newcommand{\abs}[1]{\left|#1\right|}
\newcommand{\RR}{\mathbb{R}}

\newcommand{\EE}{\mathbb{E}}

\newcommand{\norm}[1]{\ensuremath{\left\|#1\right\|}}

\newcommand{\frobnorm}[1]{\ensuremath{\left\|#1\right\|_{\text{\rm F}}}}

\newcommand{\trace}[1]{\ensuremath{\mathrm{\textbf{tr}}\left(#1\right)}}

\newcommand{\ip}[2]{\left\langle {#1},\ {#2} \right\rangle}

\newcommand{\OO}{\mathcal{O}}
\newcommand{\Id}{\mathbf{I}}
\newcommand{\J}{\mathbf{J}}

\newcommand\cc{\ensuremath{\mathbb{C}}}
\newcommand\rr{\ensuremath{\mathbb{R}}}
\newcommand\nn{\ensuremath{\mathbb{N}}}

\newcommand\p{\ensuremath{\mathsf P}}
\newcommand\np{\ensuremath{\mathsf{NP}}}

\newcommand\sk{\widehat{A}}
\newcommand\sign{A}

\title{On the Certification of the Restricted Isometry Property}

\author{%
Pascal Koiran\thanks{LIP, UMR 5668, ENS de Lyon -- CNRS -- UCBL -- INRIA, \'Ecole Normale Sup\'erieure de Lyon, Universit\'e de Lyon and Department of Computer Science, University of Toronto, Canada {\tt Pascal.Koiran@ens-lyon.fr}} \and Anastasios Zouzias\thanks{Department of Computer Science, University of Toronto, Canada {\tt zouzias@cs.toronto.edu}}
}

\begin{document}

\maketitle

\begin{abstract}
Compressed sensing is a technique for finding sparse solutions to underdetermined linear systems. This technique relies on properties of the sensing matrix such as the \emph{restricted isometry property}. Sensing matrices that satisfy 
this property with optimal parameters are mainly obtained via probabilistic arguments. Given any matrix, deciding whether it satisfies the restricted isometry property is a non-trivial computational problem. In this paper, we give reductions from dense subgraph problems to the certification of the restricted isometry property. This gives evidence that certifying 
this property is unlikely to be feasible in polynomial-time. Moreover, on the positive side we propose an improvement on the brute-force enumeration algorithm for checking the restricted isometry property.

Another contribution of independent interest is a spectral algorithm for
certifying that a random graph does not contain any dense $k$-subgraph.
This ``skewed spectral algorithm'' performs better than the basic spectral
algorithm in a certain range of parameters.
\end{abstract}

%

%
\section{Introduction}
%
Let $\Phi$ be a $n \times N$ matrix with $N \geq n$. 
A vector $x \in \cc^N$ is said
to be $k$-sparse if it has at most $k$ nonzero coordinates. 
Given $\delta \in ]0,1[$, $\phi$ is said to satisfy the Restricted Isometry
Property (RIP) of order $k$ with parameter $\delta$ if it approximately preserves the Euclidean norm in the following sense:
for every $k$-sparse vector $x$, we have
$$(1-\delta)||x||^2 \leq ||\Phi x||^2 \leq (1+\delta)||x||^2.$$
Clearly, for this to be possible we must have $k \leq n$.
Given $\delta$, $n$ and $N$, the goal is to construct RIP matrices with
$k$ as large as possible.
This problem is motivated by its applications to compressed sensing:
it is known from Cand\`es, Romberg and Tao~\cite{Candes,CRT06,CandesTao} that 
the restricted isometry property enables the efficient recovery of sparse signals using linear programming techniques.
For that purpose one can take any fixed $\delta < \sqrt{2}-1$~\cite{Candes}.

 Various probabilistic models are known to generate random matrices 
that satisfy the RIP with a value of  $k$ which is (almost) linear $n$.
See for instance Theorem~2 in Section~\ref{lazy} for the case of
matrices with entries that are independent symmetric $(\pm 1)$  
Bernouilli matrices. The recent survey~\cite{Vershynin} provides
additional results of this type and extensive references to the probabilistic
literature.
Some significant effort has been devoted to the construction of explicit
(rather than probabilistic) RIP matrices, but this appears to be a 
difficult problem. As pointed out by Bourgain et al. in 
a recent paper~\cite{BDFKK,BDFKKb}, most of the known explicit
constructions~\cite{Kashin75,AGHP92,Devore07}
are based on the construction of systems of unit vectors
with a small coherence parameter (see section~\ref{order} for a definition of this parameter and its connection to the RIP).
Unfortunately, this method cannot produce 
RIP matrices of order $k > \sqrt{n}$~\cite{BDFKK,BDFKKb}.
Bourgain et al. still manage to break through the $\sqrt{n}$
``barrier'' using techniques from additive combinatorics: they construct
RIP matrices of order $k=n^{1/2+\epsilon_0}$ where $\epsilon_0 > 0$ is an
unspecified ``explicit constant''. Note that this is still far from the 
order achieved by probabilistic constructions.
Here we study the restricted isometry property from the point
of view of computational complexity: what is the complexity of deciding
whether a matrix satisfies the RIP, and of computing or approximating
its order $k$ or its RIP parameter $\delta$? 
An efficient (deterministic) algorithm would have applications
to the construction of RIP matrices. One would draw a random matrix $\Phi$
from one of the well-established probabilistic models mentioned above,
and run this hypothetical algorithm on $\Phi$ to compute or approximate
$k$ and $\delta$. The result would be a matrix with {\em certified}
restricted isometry properties (see Section~\ref{lazy} for an actual result
along those lines). This may be the next best thing short of
an explicit construction (and as mentioned above, the known  explicit constructions are far from optimal).
The definition of the restricted isometry property suggests an exhaustive
search over $\binom{N}{k}$ subspaces, but we are not aware of any existing
hardness result suggesting that exhaustive
search is somehow unavoidable.
There has been more work from the algorithm design side. 
In particular, it was shown that semi-definite programming can be used
to verify the restricted isometry property~\cite{ABG08} and other related properties 
from compressed sensing~\cite{AspreG08,IouNem08}. Unfortunately, 
as pointed out in~\cite{AspreG08} these methods are unable to certify
the restricted isometry property
 for $k$ larger than $O(\sqrt{n})$, even for matrices that satisfy the RIP up to order $\Omega(n)$. As we have seen,
$k=O(\sqrt{n})$ is also the range where coherence-based methods reach their limits.
In this paper we provide both positive and negative results on the computational complexity of the RIP, 
including the range $k > \sqrt{n}$.
%
\subsection*{Positive Results}
%
In Section~\ref{order}, we study the relation between the RIP parameters
of different orders for a given matrix $\Phi$. Very roughly, we show
in Theorem~\ref{m2k} 
that the RIP parameter is at most proportional to the order.
We therefore have a trade-off between order and RIP parameter:
in order to construct a matrix of given order and RIP parameter, it suffices
to construct a matrix of lower order and smaller RIP parameter.
We illustrate this point in Section~\ref{lazy}. Our starting point
is the above-mentioned (very naive) exhaustive search algorithm,
which enumerates all $\binom{N}{k}$ subspaces generated by $k$ 
column vectors. 
We obtain  a ``lazy algorithm'' which enumerates instead all subspaces
generated by $l$ basis vectors for some $l < k$.
We show that the lazy algorithm can go slightly beyond the $\sqrt{n}$
barrier if a quasi-polynomial running time is allowed.
%
\subsection*{Negative Results}
%
We provide the first hardness results on the certification of the restricted
isometry property. We could not base these results on a well-established 
assumption such as $\p \neq \np$. Instead, we rely on certain assumptions
on the complexity of dense subgraph problems. There is already
a fairly large literature on the complexity of detecting dense subgraphs
in an input graph,
with both algorithmic results and hardness results (see~\cite{BCCFV10} for
extensive references).
In particular, it was shown by Khot~\cite{Khot06} that the dense $k$-subgraph 
problem does not have a 
polynomial time approximation scheme under the assumption that $\np$ does not have
randomized algorithms that run in sub-exponential time.
Earlier, Feige~\cite{Feige02} had obtained this result under the assumption
that random 3-SAT formulas are hard to refute.
Our results are of a similar flavor as those of Feige~\cite{Feige02} and Alekhnovich~\cite{Alekh03} since they are based
on average-case assumptions. However, 
dense subgraphs do not appear 
at the end of the reduction as in~\cite{Feige02} but at the beginning.
Our starting point is the assumption that it is hard to certify that a random $G(n,1/2)$ graph does not contain any $k$-subgraph with density at least $1/2+\epsilon$
(note that $1/2$ is the expected density). 
A certification algorithm must certify most of the graphs on $n$ vertices, and
any certified graph cannot contain any $k$-subgraph with density at least $1/2+\epsilon$.
More precisely, 
 we present two distinct
hardness hypotheses corresponding to different settings for the
parameters $k$ and~$\epsilon$. 
In the first hypothesis we set $k=n^{\alpha}$ and 
$\epsilon=1/n^{\beta}$, where $\alpha,\beta \in ]0,1[$ and $\alpha-\beta<1/2$.
In the second hypothesis we set $k=\alpha n$ and $\epsilon = \const /\sqrt{n}$
where $\alpha \in ]0,1[$
is a small enough constant, 
$\const>0$ and $\alpha \const$ is small enough.
Here, ``small enough'' means that 
$\alpha$ and $\alpha \const$ are both upper bounded by $\kappa$,
where 
$\kappa<1$ is a universal constant. The second hypothesis postulates
that with the above parameter settings, there is a value of $\kappa$ for which 
certifying most graphs on $n$ vertices is computationally difficult.
We believe that these two hypotheses are
consistent with the current state of knowledge on dense subgraph problems.
In particular, without the constraints on $\alpha,\beta$ and $\const$ these
hypotheses would be refuted by a simple spectral algorithm.
Each of the two hypotheses leads to different hardness results.
From the first hypothesis it follows that RIP parameters cannot be
approximated to within any constant factor in polynomial time.
From the second hypothesis we derive the following stronger result:
no polynomial time algorithm can distinguish between a matrix which
satisfies the RIP of order $k$ with parameter (say) $\kappa/1000$ and 
a matrix which does {\em not} satisfy the RIP with parameter $\kappa/2$
(here the constant 1000 can be replaced by any constant larger than 2).
%

%
Finally, in an effort to further test the validity of  these two hypotheses 
we propose a {\em skewed spectral algorithm} for certifying the absence 
of a $k$-subgraph of  density at least $1/2+\epsilon$ in a random graph.
This algorithm performs better than the  basic spectral algorithm 
for certain 
settings of $k$ and $\epsilon$, 
but is still consistent with our two hypotheses.
%
\subsection*{Organization of the paper}
%
As explained above, the next two sections are devoted to positive results.
In Section~\ref{eigenvalues} we work out some bounds on the eigenvalues of
random matrices, for later use in our reductions from dense subgraph problems
to the approximation of RIP parameters.
We rely mainly on the classical work of F\"uredi and Koml\'os~\cite{FK81}
as well as on a more recent concentration inequality due to Alon, Krivelevich
and Vu~\cite{AKV02}.
In Section~\ref{subgraphs} we present our two hypotheses on the intractability
of dense subgraph problems, and we show that they are consistent with
what we know from the spectral algorithm.
In Section~\ref{hard_square} we use these hypotheses to show that 
approximating RIP parameters is hard even for square matrices.
In Section~\ref{hard_rect} we derive similar results for matrices of
``strictly rectangular'' format (which is the case of interest in compressed
sensing).
We proceed by reduction from the square case.
Interestingly, this last reduction relies on the known constructions
(deterministic~\cite{BDFKK,BDFKKb} and probabilistic~\cite{Vershynin}) 
of matrices with good RIP parameters
mentioned earlier in the introduction.
We therefore turn these positive results into negative results.
The table at the end of Section~\ref{hard_rect} gives a summary of 
our hardness results.
Section~\ref{sec:skewed_alg} is devoted to our skewed spectral algorithm. This part
of the paper can be read independently from the RIP results as it deals
with a purely graph-theoretic problem. The analysis of the skewed 
algorithm is based
on a new spectral norm estimate for a certain class of random matrices.
\section{Increasing the order by decreasing the RIP parameter}
\label{order}
As explained at the beginning of~\cite{BDFKK,BDFKKb}, 
certain (suboptimal) constructions
are based on the construction of systems of unit vectors 
$(u_1,\ldots,u_N) \in \cc^n$ with small coherence. The coherence parameter
$\mu$ is defined as $\max_{i \neq j} |\langle u_i, u_j \rangle|$.
Indeed, we have the following proposition.
\begin{proposition} \label{coherence}
Assume that the column vectors $u_1,\ldots,u_N$ of $\Phi$ are of norm~1
and coherence $\mu$. Then $\Phi$ satisfies the RIP of order $k$ with
parameter $\delta=(k-1)\mu$.
\end{proposition}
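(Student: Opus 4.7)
The plan is to reduce to a bound on the eigenvalues of the $k\times k$ Gram submatrix indexed by the support of $x$. Fix a $k$-sparse vector $x\in\cc^N$ and let $S\subseteq\{1,\ldots,N\}$ be a set of size $k$ containing the support of $x$. Writing $\Phi x = \sum_{i\in S} x_i u_i$, I would expand
\[
\|\Phi x\|^2 = \sum_{i,j\in S} \overline{x_i}\, x_j\, \langle u_i,u_j\rangle = x_S^{*} G_S x_S,
\]
where $x_S\in\cc^k$ is the restriction of $x$ to $S$ and $G_S=(\langle u_i,u_j\rangle)_{i,j\in S}$ is the corresponding Gram matrix. Since the $u_i$ are unit vectors, $G_S$ has $1$'s on its diagonal, so $G_S = \Id + E_S$ with $E_S$ Hermitian, zero on the diagonal, and with off-diagonal entries bounded in modulus by $\mu$ by the definition of the coherence parameter.

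Next I would invoke Gershgorin's disk theorem applied to the Hermitian matrix $E_S$: every eigenvalue lies in a disk centered at a diagonal entry (here, $0$) with radius equal to the sum of the moduli of the off-diagonal entries of that row, which is at most $(k-1)\mu$. Therefore $\|E_S\|\leq (k-1)\mu$, and consequently the eigenvalues of $G_S=\Id+E_S$ all lie in the interval $[1-(k-1)\mu,\,1+(k-1)\mu]$. From the Rayleigh quotient characterization this yields
\[
\bigl(1-(k-1)\mu\bigr)\|x_S\|^2 \;\leq\; x_S^{*} G_S x_S \;\leq\; \bigl(1+(k-1)\mu\bigr)\|x_S\|^2.
\]

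Finally, since $\|x_S\|=\|x\|$ and $x_S^{*}G_S x_S=\|\Phi x\|^2$, this establishes the RIP of order $k$ with parameter $\delta=(k-1)\mu$. There is essentially no serious obstacle here: the only step that requires any care is the passage from the coherence bound on individual inner products to a spectral bound on $E_S$, and Gershgorin handles this cleanly. One might alternatively derive the same bound by writing $|\sum_{i\neq j} \overline{x_i} x_j \langle u_i,u_j\rangle| \leq \mu(\|x\|_1^2-\|x\|_2^2)$ and using $\|x\|_1\leq\sqrt{k}\|x\|_2$, which yields the same constant $(k-1)\mu$ up to the same quantity, but the Gershgorin route is the most transparent.
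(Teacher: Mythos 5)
Your proof is correct but takes a genuinely different route from the paper's. The paper gives a one-line direct estimate: it expands $\|\Phi x\|^2-\|x\|^2 = 2\sum_{i<j} x_ix_j\langle u_i,u_j\rangle$, bounds each inner product by $\mu$, and then uses $(\sum_i |x_i|)^2 - \|x\|^2 \le (k-1)\|x\|^2$ (the $\ell^1$-vs-$\ell^2$ inequality on a $k$-sparse vector) to finish — exactly the ``alternative'' you sketch at the end. Your main argument instead passes through a spectral bound: you form the Gram matrix $G_S = \Id + E_S$ and apply Gershgorin to $E_S$ (Hermitian, zero diagonal, off-diagonal entries of modulus $\le\mu$) to get $\|E_S\| \le (k-1)\mu$, then read off the RIP from the Rayleigh quotient. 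Both are valid and give the identical constant. Gershgorin is a touch heavier and produces the more structured intermediate statement $\|G_S - \Id\| \le (k-1)\mu$, which is reusable on its own; the paper's direct sum estimate is shorter and, more to the point, is the template that the authors then generalize in Theorem~\ref{m2k}, where the same ``bound the off-diagonal bilinear form and compare overlapping index sets'' manipulation is the whole argument. So while your Gershgorin route is perfectly clean, the paper's choice is the one that actually scales to the next result.
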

We reproduce the proof from~\cite{BDFKK,BDFKKb} since if fits in one line:
for any $k$-sparse vector $x$, 
$$| ||\Phi x||^2 - ||x||^2| \leq 2 \sum_{i<j} |x_ix_j \langle u_i,u_j \rangle| \leq \mu ((\sum_i |x_i|)^2 - ||x||^2) \leq (k-1)\mu||x||^2.$$
%

%
We now give a result, which (as we shall see) generalizes Proposition~\ref{coherence}.
\begin{theorem} \label{m2k}
Assume that $\Phi$ has unit column vectors and satisfies the RIP of order $m$ with parameter $\epsilon$. For $k \geq m$, $\Phi$ also satisfies the RIP
of order $k$ with parameter $\delta = \epsilon(k-1)/(m-1)$.
\end{theorem}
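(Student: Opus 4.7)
The plan is to derive the order-$k$ bound from the order-$m$ bound by an averaging argument over the $m$-element subsets of the support of a $k$-sparse vector. Writing $G = \Phi^* \Phi$, the RIP hypothesis says that $|x^*(G - I)x| \leq \epsilon \|x\|^2$ whenever $x$ is $m$-sparse, while the off-diagonal part $x^*(G - I)x = \sum_{i \neq j} x_i \overline{x_j} \langle u_i, u_j \rangle$ is a sum whose each term involves only two columns. So the quadratic form for a $k$-sparse $x$ can be reconstructed from its restrictions to $m$-element subsets of the support.

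More concretely, fix a $k$-sparse vector $x$ supported on a set $T$ with $|T| = k$, and for each $S \subseteq T$ with $|S| = m$ let $x_S$ denote the vector obtained from $x$ by zeroing out coordinates outside $S$. First I would compute $\sum_{S} (\|\Phi x_S\|^2 - \|x_S\|^2)$ two ways: on the one hand it equals $\sum_{S} \sum_{i \neq j,\ i,j \in S} x_i \overline{x_j} \langle u_i, u_j \rangle$, and since each ordered pair $(i,j)$ with $i \neq j$ in $T$ lies in exactly $\binom{k-2}{m-2}$ such $S$, this collapses to $\binom{k-2}{m-2} (\|\Phi x\|^2 - \|x\|^2)$. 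On the other hand the triangle inequality plus the order-$m$ RIP gives $\bigl|\sum_{S} (\|\Phi x_S\|^2 - \|x_S\|^2)\bigr| \leq \epsilon \sum_{S} \|x_S\|^2$, and a similar double-counting (each $|x_i|^2$ appears in $\binom{k-1}{m-1}$ of the $x_S$) yields $\sum_{S} \|x_S\|^2 = \binom{k-1}{m-1} \|x\|^2$.

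Combining the two identities gives $|\|\Phi x\|^2 - \|x\|^2| \leq \epsilon \cdot \binom{k-1}{m-1}/\binom{k-2}{m-2} \cdot \|x\|^2 = \epsilon (k-1)/(m-1) \cdot \|x\|^2$, which is exactly the claimed bound. Note that the hypothesis that the columns are unit vectors is used (implicitly) to ensure that $\|x_S\|^2$ is exactly the sum of the $|x_i|^2$ with $i \in S$, so that the counting argument for $\sum_S \|x_S\|^2$ is clean; equivalently it ensures that the diagonal of $G$ is the identity, so that the only contributions to $x^*(G-I)x$ come from pairs of distinct indices.

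I do not foresee any serious obstacle: the argument is essentially a double-counting/averaging over $m$-subsets, and the only thing to be careful about is the binomial factor $\binom{k-1}{m-1}/\binom{k-2}{m-2} = (k-1)/(m-1)$, which produces exactly the claimed constant. As a sanity check, taking $m = 2$ recovers Proposition~\ref{coherence}: the order-$2$ RIP parameter of a matrix with unit columns and coherence $\mu$ is $\mu$, and the theorem then gives $\delta = (k-1)\mu$, matching the coherence-based bound.
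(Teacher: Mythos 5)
Your proposal is correct and follows essentially the same route as the paper: both proofs fix the support $T$ of a $k$-sparse vector, sum the off-diagonal quadratic form over all $m$-subsets $S \subseteq T$, and use the double-counting facts that each pair appears in $\binom{k-2}{m-2}$ subsets while each index appears in $\binom{k-1}{m-1}$ subsets to obtain the ratio $(k-1)/(m-1)$. The only cosmetic difference is that you phrase the argument through $G = \Phi^*\Phi$ and the restricted vectors $x_S$, whereas the paper manipulates the sums $\sum_{i<j} x_i x_j \langle u_i, u_j\rangle$ directly, but these are the same computation.
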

\begin{proof}
Let $u_1,\ldots,u_N$ be the column vectors of $\Phi$. Let $x$ be a $k$-sparse
vector, and 
write $x=\sum_{i \in T} x_iu_i$ where $T$ is a subset of $\{1,\ldots,N\}$ of
size $k$.
Since $||\Phi x||^2 = ||x||^2 + 2 \sum_{i<j} x_ix_j \langle u_i,u_j \rangle$, to check the RIP of order $k$ we need to show that
\begin{equation} \label{innerprods}
|\sum_{i<j} x_i x_j \langle u_i,u_j \rangle| \leq \delta||x||^2/2,
\end{equation}
where $\delta = \epsilon(k-1)/(m-1)$.
To estimate the left hand side, we compare it to the sum of 
the similar quantity taken over all subsets of size $m$ of $T$, namely:
\begin{equation} \label{sum}
|\sum_{|S|=m} \sum_{i,j \in S, i<j} x_i x_j \langle u_i,u_j \rangle|.
\end{equation}
Since each pair $(i,j)$ appears in exactly $\binom{k-2}{m-2}$ subsets of 
size $m$, this sum is equal to $\binom{k-2}{m-2}$ times the left-hand side
of~(\ref{innerprods}). But we can also estimate (\ref{sum}) using the RIP
of order $m$. 
For each subset $S$ of size $m$, we have 
$$|\sum_{i,j \in S, i<j} x_i x_j \langle u_i,u_j \rangle| \leq \epsilon\sum_{i \in S}x_i^2/2.$$
This follows from~(\ref{innerprods}), replacing $\delta$ 
by $\epsilon$ (the RIP parameter of order $m$).
Since each term $x_i^2$ will appear exactly in $\binom{k-1}{m-1}$ subsets,
we obtain $\epsilon \binom{k-1}{m-1}||x||^2/2$ 
as an upper bound for~(\ref{sum}).
We conclude that  the left-hand side
of~(\ref{innerprods}) is bounded by $\frac{\epsilon}{2} \binom{k-1}{m-1}||x||^2/\binom{k-2}{m-2} = 
\epsilon \frac{k-1}{m-1}||x||^2/2$.\end{proof}
We claim that Proposition~\ref{coherence} is the case $m=2$ of 
Theorem~\ref{m2k}. This follows from the following observation.
\begin{remark} \label{order2}
For a matrix $\Phi$ with unit column vectors, the coherence parameter $\mu$
is equal to the RIP parameter of order 2.
\end{remark}
\begin{proof}
Let $\delta$ be the RIP parameter of order 2. We have $\delta \leq \mu$ by Proposition~\ref{coherence}. It remains to show that $\delta \geq \mu$.
Consider therefore two column vectors $u_i$ and $u_j$ with 
$|\langle u_i,u_j \rangle| = \mu$. 
Let $x=u_i+u_j$. We have $||x||^2=2$ and $||\Phi x||^2=2 \pm 2\mu$, so 
that $\delta \geq \mu$ indeed.
\end{proof}
%
\section{A Matrix Certification Algorithm} \label{lazy}
%
%


%
%
The naive algorithm for computing the RIP parameter of order $k$ will involve the enumeration of the $\binom{N}{k}$ submatrices of $\Phi$ made up
of $k$ column vectors of $\Phi$. For each $T \subseteq \{1,\ldots,N\}$ of
size $k$ let us denote by $\Phi_T$ the corresponding $n \times k$ matrix.
We need to compute (or upper bound) $\delta = \max_T \delta_T$, where 
$$\delta_T = \sup_{x \in \cc^k} |\ ||\Phi_T x||^2 /||x||^2 -1\ |.$$
For each $T$, $\delta_T$ can be computed efficiently by linear algebra. 
For instance, $\delta_T$ is the spectral radius of the self-adjoint matrix
$\Phi_T^* \Phi_T - \Id_k$.
The cost of the computation is therefore dominated by the combinatorial factor
$\binom{N}{k}$ due to the enumeration of all subsets of size $k$.
Here we analyze what the naive algorithm can gain from Theorem~\ref{m2k}.
We therefore consider the following {\em lazy algorithm.}
\begin{algorithm}{}
	\caption{}\label{alg:lazy}
\begin{algorithmic}[1]
\Procedure{Lazy}{$\Phi$, $m$, $\delta$}
\State {\bf Input:} a $n \times N$ matrix $\Phi$ with unit column vectors, an integer $m \leq n$,
and a parameter $\delta \in ]0,1[$.
\State Compute as explained above the RIP parameter of order $m$. Call it $\epsilon$.
\State {\bf Output:} Certify $\Phi$ as a RIP matrix of order $k$ with parameter $\delta$,
for all $k \geq m$ such that $\epsilon (k-1)/(m-1) \leq \delta$.
\EndProcedure 
\end{algorithmic}
\end{algorithm}
The correctness of the 
algorithm follows immediately from Theorem~\ref{m2k}. We now analyze its behavior on random matrices, which are in many cases known to satisfy the RIP with high probability. Consider for instance 
the case of a matrix whose entries are independent symmetric Bernouilli
random variables.
\begin{theorem} \label{randrip}
Let $A$ be a $n \times N$ matrix whose entries are  independent symmetric 
Bernouilli random variables and assume that 
$n \geq C \epsilon^{-2}m \log(eN/m)$.
With probability at least $1-2\exp(-c\epsilon^2n)$, 
the normalized matrix $\Phi=\frac1{\sqrt{n}}A$ 
satisfies the  the RIP of order $m$ with parameter $\epsilon$.
Here $C$ and $c$ are absolute constants.
\end{theorem}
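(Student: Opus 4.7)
The plan is to establish the RIP bound for $\Phi = A/\sqrt{n}$ via the classical concentration plus $\epsilon$-net scheme: reduce the uniform RIP statement to a pointwise deviation inequality on each $m$-dimensional coordinate subspace, lift this to a uniform statement via a volumetric net, and finish with a union bound over all $\binom{N}{m}$ supports.

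First I would observe that the RIP of order $m$ with parameter $\epsilon$ is equivalent to
$$\sup_{|T|=m}\;\sup_{x \in S_T} \bigl|\,\norm{\Phi x}^2 - 1\,\bigr| \le \epsilon,$$
where $S_T \subset \rr^N$ is the unit sphere of vectors supported on $T$. For any fixed unit $x$ supported on $T$, one has $\norm{\Phi x}^2 = n^{-1} \sum_{i=1}^n \ip{a_i}{x}^2$, where $a_1,\ldots,a_n$ are the i.i.d.\ symmetric Bernoulli rows of $A$. Each $\ip{a_i}{x}$ is a sum of independent bounded mean-zero variables, hence sub-Gaussian with parameter $\norm{x}_2 = 1$, so $\ip{a_i}{x}^2$ is sub-exponential with mean $1$. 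A Bernstein-type inequality then yields
$$\Prob\bigl[\,|\norm{\Phi x}^2 - 1| > \epsilon/2\,\bigr] \le 2 \exp(-c_0 \epsilon^2 n)$$
for an absolute constant $c_0 > 0$.

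Next I would lift this pointwise bound to a uniform bound over $S_T$. A standard volumetric construction produces an $(\epsilon/8)$-net $\mathcal{N}_T \subset S_T$ of cardinality at most $(C_1/\epsilon)^m$ such that controlling the deviation by $\epsilon/2$ on $\mathcal{N}_T$ forces the deviation on all of $S_T$ to be at most $\epsilon$, by the usual bilinear expansion of $\norm{\Phi y}^2 - \norm{\Phi x}^2$ and the operator norm bound on $\Phi_T^* \Phi_T - \Id_m$. A union bound over $\mathcal{N}_T$ gives failure probability at most $2(C_1/\epsilon)^m \exp(-c_0 \epsilon^2 n)$ for a single $T$.

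Finally I would union-bound over the $\binom{N}{m} \le (eN/m)^m$ supports, obtaining overall failure probability at most
$$2\exp\!\Bigl( m \log(eN/m) + m \log(C_1/\epsilon) - c_0 \epsilon^2 n \Bigr).$$
For $n \ge C \epsilon^{-2} m \log(eN/m)$ with $C$ a sufficiently large absolute constant, this is bounded by $2\exp(-c \epsilon^2 n)$ for some absolute $c > 0$. The main (mild) obstacle is ensuring that the $\log(C_1/\epsilon)$ term introduced by the net is absorbed into the dominant $\log(eN/m)$ term; this is routine in the regime of interest and is the only place where one has to track constants carefully. Alternatively, one can invoke a ready-made operator-norm concentration bound for random matrices with independent sub-Gaussian rows (as in Vershynin's survey cited in the paper) and skip the explicit net, which would shorten the argument.
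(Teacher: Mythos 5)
The paper does not actually prove Theorem~\ref{randrip}; it states it and immediately cites Vershynin's survey (Theorem~64 there) as the source, noting that the same statement holds for any subgaussian model with independent rows or columns. So you are supplying a proof that the authors outsourced, and your sketch is essentially the standard concentration--net--union-bound argument that appears in Vershynin's survey and in many compressed-sensing references.

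The structure is right: reduce to the event $\sup_T \norm{\Phi_T^*\Phi_T - \Id_m} \le \epsilon$, use sub-exponential (Bernstein) concentration pointwise, pass to a net on each $S_T$, and union-bound over the $\binom{N}{m} \le (eN/m)^m$ supports. There is, however, a genuine technical slip in the net step. You take an $(\epsilon/8)$-net of $S_T$, which has cardinality $(1 + 16/\epsilon)^m$, and this injects an $m\log(C_1/\epsilon)$ term into the exponent. That term is \emph{not} dominated by $c_0\epsilon^2 n$ under the stated hypothesis $n \ge C\epsilon^{-2} m\log(eN/m)$: the hypothesis gives $\epsilon^2 n \ge Cm\log(eN/m)$, which controls $m\log(eN/m)$ but places no upper bound on $\log(1/\epsilon)$, so for $\epsilon$ tending to zero much faster than any fixed power of $m/N$ the union bound fails and the constants $C,c$ cannot be taken absolute. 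You flag the issue yourself but characterize it as routine bookkeeping; in fact it is not absorbed and needs a different fix. The standard remedy is to use a \emph{fixed-radius} net, e.g.\ a $1/4$-net of $S_T$ of cardinality at most $9^m$: if $|x^*(\Phi_T^*\Phi_T - \Id_m)x| \le \epsilon/2$ for all $x$ in a $1/4$-net, then the usual successive-approximation bound $\norm{B} \le (1-2\eta)^{-1}\max_{x\in\mathcal N}|x^*Bx|$ with $\eta = 1/4$ gives $\norm{\Phi_T^*\Phi_T - \Id_m} \le \epsilon$ on the whole sphere. The net cardinality then contributes only $m\log 9$, which is $O(m\log(eN/m))$ since $\log(eN/m)\ge 1$, and the union bound closes cleanly with absolute constants. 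With that replacement your argument is correct and is precisely the proof behind the cited result.
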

In fact the same theorem holds for a very large class of random matrix models,
namely, subgaussian matrices with either independent rows or independent columns (\cite{Vershynin}, Theorem~64).
\begin{proposition} \label{lazy_analysis}
Let $A$ be a random matrix as in Theorem~\ref{randrip}, 
and $\delta \in ]0,1[$. 
With probability at least $1-2(eN/m)^{-cCm}$, 
the lazy algorithm presented above will certify 
that $A$ satisfies the RIP of order $k$ with parameter $\delta$ for
all $k$ such that: $$k \leq \delta \sqrt{\frac{mn}{c \log (eN/m)}}.$$
Here $c$ and $C$ are the absolute constants from Theorem~\ref{randrip}.
\end{proposition}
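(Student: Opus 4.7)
The plan is to combine Theorem~\ref{randrip} with the correctness guarantee of the lazy algorithm (which comes from Theorem~\ref{m2k}). The lazy algorithm, run with parameters $m$ and $\delta$, first computes the true RIP parameter $\epsilon$ of order $m$ of the input matrix, and then certifies the RIP of order $k$ with parameter $\delta$ for every $k$ satisfying $\epsilon(k-1)/(m-1) \leq \delta$. So the whole question reduces to obtaining a good probabilistic upper bound on $\epsilon$ and then solving this inequality for~$k$.

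First I would choose the target value of $\epsilon$ that makes the hypothesis of Theorem~\ref{randrip} tight. Setting $\epsilon_0 := \sqrt{Cm\log(eN/m)/n}$, the inequality $n \geq C\epsilon_0^{-2}m\log(eN/m)$ holds with equality, so Theorem~\ref{randrip} applied to $\Phi = A/\sqrt n$ yields that with probability at least $1-2\exp(-c\epsilon_0^2 n)$, the matrix $\Phi$ satisfies the RIP of order $m$ with parameter $\epsilon_0$. Substituting the chosen value of $\epsilon_0$ into the exponent, the failure probability becomes $2\exp(-cCm\log(eN/m)) = 2(eN/m)^{-cCm}$, which is exactly the probability stated in the proposition. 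On this event, the $\epsilon$ actually computed by the algorithm satisfies $\epsilon \leq \epsilon_0$.

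Next I would translate this bound on $\epsilon$ into the claimed bound on $k$. The certification condition $\epsilon(k-1)/(m-1) \leq \delta$ is implied by $\epsilon k/m \leq \delta$, hence by $k \leq \delta m/\epsilon_0$. Plugging in $\epsilon_0 = \sqrt{Cm\log(eN/m)/n}$ gives
$$k \;\leq\; \delta m \sqrt{\frac{n}{Cm\log(eN/m)}} \;=\; \delta\sqrt{\frac{mn}{C\log(eN/m)}},$$
which (up to the naming of the absolute constant appearing in the denominator under the square root) is the bound asserted in the proposition. Thus on the good event guaranteed by Theorem~\ref{randrip}, the lazy algorithm certifies every such $k$.

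There is no substantial obstacle: the argument is essentially a single substitution. The only minor point to watch is the unit-column-vector assumption required by the lazy algorithm; for $\Phi = A/\sqrt n$ with symmetric Bernoulli entries each column has squared norm exactly $1$, so no normalization issue arises. (In the subgaussian generalization one either renormalizes or absorbs the $O(\sqrt{n\log N})$ deviation of column norms into the constants, but this is a routine adjustment rather than a real difficulty.)
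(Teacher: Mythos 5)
Your proof follows exactly the same route as the paper's: saturate the hypothesis of Theorem~\ref{randrip} by taking $\epsilon_0 = \sqrt{Cm\log(eN/m)/n}$, substitute into the failure probability to get $2(eN/m)^{-cCm}$, and read off the bound $k \le \delta m/\epsilon_0 = \delta\sqrt{mn/(C\log(eN/m))}$. You are also right that the lowercase $c$ in the denominator of the proposition's statement should be the uppercase $C$ from the hypothesis $n \ge C\epsilon^{-2}m\log(eN/m)$; that appears to be a typo in the paper.

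One small logical slip worth flagging: you write that the certification condition $\epsilon(k-1)/(m-1) \le \delta$ is \emph{implied by} $\epsilon k/m \le \delta$, and hence by $k\le \delta m/\epsilon_0$. The implication actually runs the other way. For $k \ge m \ge 2$ one has $(k-1)/(m-1) \ge k/m$, so $\epsilon(k-1)/(m-1) \ge \epsilon k/m$, and the exact condition for certification is the slightly stronger $k \le 1 + \delta(m-1)/\epsilon$, which is smaller than $\delta m/\epsilon$ precisely when $\delta > \epsilon$ (the regime of interest). This is only an additive $O(\delta/\epsilon_0)$ loss in $k$, negligible compared to $k \approx \delta m/\epsilon_0$, and the paper's own one-line proof glosses over exactly the same point; but since you state the implication explicitly, the direction should be corrected, or the discrepancy absorbed into the constants.
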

\begin{proof}
All parameters being fixed we take $\epsilon$ as small as allowed by 
Theorem~\ref{randrip}, so that $n\epsilon^2=Cm \log(eN/m)$.
This yields the announced probability estimate, and the upper bound on $k$ 
is $\delta m/\epsilon$.
\end{proof}
To compare the lazy algorithm to the naive algorithm, set for instance 
$m=\sqrt{n}$. In applications to compressed sensing one can set $\delta$ to
a small constant value (any $\delta < \sqrt{2}-1$ will do). Thus, disregarding constant and
logarithmic factors, with high probability the lazy algorithm will certify 
the RIP property for $k$ of order roughly $n^{3/4}$.
This is achieved by enumerating 
$\binom{N}{n^{1/2}}$ subspaces, whereas the naive algorithm would enumerate roughly $\binom{N}{n^{3/4}}$ subspaces.

Another choice of parameters in Proposition~\ref{lazy_analysis} shows that one can beat the $\sqrt{n}$ 
bound by a logarithmic factor
with a quasi-polynomial time algorithm.
For instance:
\begin{corollary}
If we set $m=(\log N)^3$, the lazy algorithm runs in time $2^{O(\log^4 N)}$
and, with probability at least $1-2^{-\Omega((\log N)^4)}$
 certifies that $A$ satisfies the RIP of order $k$ with parameter $\delta$
for all $k \leq K \delta \log N \sqrt{n}$, where $K$ is an absolute constant.
\end{corollary}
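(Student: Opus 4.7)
The corollary is a direct specialization of Proposition~\ref{lazy_analysis} with the choice $m=(\log N)^3$, so my plan is simply to substitute this value into the three quantities the proposition controls (the threshold on $k$, the success probability, and the running time) and simplify.

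First I would handle the threshold. Proposition~\ref{lazy_analysis} gives
$$k \leq \delta\sqrt{\frac{mn}{c\log(eN/m)}}.$$
With $m=(\log N)^3$, we have $\log(eN/m)=\log N-3\log\log N+O(1)=\Theta(\log N)$, so
$$\sqrt{\frac{mn}{c\log(eN/m)}}=\Theta\!\left(\sqrt{\frac{(\log N)^3 n}{\log N}}\right)=\Theta(\log N\sqrt{n}),$$
which yields the desired bound $k\leq K\delta\log N\sqrt{n}$ for a suitable absolute constant $K$.

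Second, for the failure probability: Proposition~\ref{lazy_analysis} bounds it by $2(eN/m)^{-cCm}$. Taking logarithms,
$$cCm\log(eN/m)=\Theta((\log N)^3\cdot\log N)=\Omega((\log N)^4),$$
which gives probability of success at least $1-2^{-\Omega((\log N)^4)}$ as claimed.

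Third, for the running time, the algorithm computes the RIP parameter of order $m$ by enumerating all $\binom{N}{m}$ subsets of size $m$ and solving a polynomial-size eigenvalue problem on each. Since
$$\binom{N}{m}\leq N^m=2^{m\log N}=2^{(\log N)^4},$$
and the linear-algebra step on each subset costs polynomially in $N$ (hence negligible compared with the combinatorial factor), the overall running time is $2^{O(\log^4 N)}$. There is no real obstacle here: each step is a routine substitution, with the only care needed being to check that $\log(eN/m)=\Theta(\log N)$ for $m=(\log N)^3$, which holds for all sufficiently large $N$.
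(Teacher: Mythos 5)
Your proof is correct and follows exactly the route the paper intends: the corollary is just Proposition~\ref{lazy_analysis} with $m=(\log N)^3$ plugged in, and the three simplifications you perform (noting $\log(eN/m)=\Theta(\log N)$, computing the exponent in the failure bound, and bounding $\binom{N}{m}\le N^m=2^{(\log N)^4}$) are precisely the routine substitutions the paper leaves implicit. Nothing is missing.
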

%
%

\section{Eigenvalues of Random Symmetric Matrices} \label{eigenvalues}
%
Proposition~\ref{modelC} is the main probabilistic inequality that we derive in this section.
It shows that square matrices obtained by Cholesky decomposition from a certain class of random matrices
have good RIP parameters with high probability. This result is then used in Section~\ref{hard_square} 
to give a reduction from dense subgraph problems to the approximation of RIP parameters.
%
\subsection{Model A}
%
Consider the following random matrix model: $A$ is a symmetric $k \times k$ 
matrix with $a_{ii}=0$, and for $i < j$ the $a_{ij}$ are independent symmetric
Bernouilli random variables.

Let $\lambda_1(A) \geq \lambda_2(A) \geq \dots \lambda_k(A)$ 
be the eigenvalues of $A$. Let $m_s$ be the median of $\lambda_s(A)$.
From the main result of~\cite{AKV02} (bottom of p.~$263$) we have for $t \geq 0$
the inequality:
\[\Pr[\lambda_s(A) - m_s \geq t] \leq 2e^{-t^2/32s^2}.\]
From F\"uredi and Koml\'os (\cite{FK81}, Theorem~2) we know that 
$m_1 \leq 3 \sigma \sqrt{k}$ for~$k$ large enough, 
where $\sigma=1$ is the standard deviation of the $a_{ij}$ in the case $i<j$.
Therefore we have 
\[\Pr[\lambda_1(A) \geq 3 \sqrt{k}+t] \leq 2e^{-t^2/32}.\]
Since $\lambda_k(A)=-\lambda_1(-A)$ and $-A$ has same distribution as $A$,
we also have 
\[\Pr[\lambda_k(A) \leq -3 \sqrt{k}-t] \leq 2e^{-t^2/32}\]
(one could also apply directly the bound on $\lambda_k(A)$ for the more
general model considered in~\cite{AKV02}).
As a result:
\begin{proposition} \label{modelA}
There is an integer $k_0$ such that for all $k \geq k_0$ 
and for all $t \geq 0$ we have:
\[\Pr[ \max_i |\lambda_i(A)| \geq 3 \sqrt{k}+t] \leq 4e^{-t^2/32}.\]
\end{proposition}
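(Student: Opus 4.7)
The proposition is essentially a packaging of the two tail bounds already derived in the paragraphs just above it, so my plan is simply to combine them via a union bound after observing that the maximum absolute eigenvalue is controlled by the extremal eigenvalues $\lambda_1(A)$ and $\lambda_k(A)$.

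First I would note the set-theoretic identity
\[
\{\max_i |\lambda_i(A)| \geq 3\sqrt{k}+t\} = \{\lambda_1(A) \geq 3\sqrt{k}+t\} \cup \{\lambda_k(A) \leq -(3\sqrt{k}+t)\},
\]
valid because $A$ is symmetric with real spectrum, $\lambda_1$ is the largest eigenvalue, and $\lambda_k$ is the smallest, so $\max_i|\lambda_i(A)| = \max(\lambda_1(A),-\lambda_k(A))$.

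Then I would invoke the two one-sided bounds already displayed in the text. The bound
\[\Pr[\lambda_1(A) \geq 3\sqrt{k}+t] \leq 2e^{-t^2/32}\]
follows, for $k$ larger than some threshold $k_0$, by combining the Alon--Krivelevich--Vu concentration inequality around the median with the Füredi--Komlós upper estimate $m_1 \leq 3\sqrt{k}$ on that median (which is only known to hold asymptotically, hence the need for $k \geq k_0$). The symmetric bound
\[\Pr[\lambda_k(A) \leq -3\sqrt{k}-t] \leq 2e^{-t^2/32}\]
follows from the distributional identity $A \stackrel{d}{=} -A$ (every off-diagonal entry is a symmetric Bernoulli, and the diagonal is zero), which gives $\lambda_k(A) = -\lambda_1(-A) \stackrel{d}{=} -\lambda_1(A)$.

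Finally a union bound over these two events yields the claimed factor of~$4$. There is no real obstacle: the constant $k_0$ is exactly the threshold required by the Füredi--Komlós median estimate, and the exponent $1/32$ matches the one in the AKV02 inequality applied with $s=1$ (for $\lambda_1$) and $s=k$ combined with symmetry (alternatively, $s=1$ applied directly to $-A$), so no additional constants need to be tracked.
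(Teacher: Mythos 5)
Your proposal is correct and follows essentially the same route as the paper: decompose $\max_i|\lambda_i(A)| = \max(\lambda_1(A),-\lambda_k(A))$, bound $\lambda_1$ via the Alon--Krivelevich--Vu concentration inequality around the median combined with the F\"uredi--Koml\'os estimate $m_1 \leq 3\sqrt{k}$, handle $\lambda_k$ by the symmetry $A \stackrel{d}{=} -A$, and finish with a union bound. The paper derives exactly these two one-sided tail bounds in the preceding paragraphs and then states the proposition as their immediate consequence.
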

\begin{remark} \label{constant3}
The constant 3 in Proposition~\ref{modelA} can be replaced
by any constant bigger than 2 (see Theorem~2 in~\cite{FK81}).
\end{remark}
%
\subsection{Model B}
%
Next we consider the model where $B$ is a symmetric $k \times k$ matrix satisfying the following condition: $b_{ii}=1$, and $b_{ij}=c \cdot a_{ij}/\sqrt{n}$ 
 for $i<j$, where the $a_{ij}$ are independent symmetric
Bernouilli random variables. Here $c>0$ is a fixed constant, and $n$ is an additional parameter 
which should be thought of as going to infinity with $k$.
\begin{corollary} \label{modelB}
Assume that $k \geq k_0$ 
and that 
$\delta\sqrt{n} \geq 3c\sqrt{k}$.
Then the eigenvalues of $B$ all lie in the interval
$[1-\delta,1+\delta]$ with probability at least
\[1-4\exp[-(\frac{\delta\sqrt{n}}{c} -3\sqrt{k})^2/32].\]
\end{corollary}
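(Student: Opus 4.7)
My plan is to observe that $B$ can be written as $B = \Id_k + (c/\sqrt{n})\, A$, where $A$ is precisely a matrix distributed according to Model A (symmetric $k\times k$, zero diagonal, independent $\pm 1$ Bernoulli entries strictly above the diagonal, reflected below). Consequently the eigenvalues of $B$ are exactly $1 + (c/\sqrt{n})\lambda_i(A)$, so the condition ``all eigenvalues of $B$ lie in $[1-\delta,1+\delta]$'' is equivalent to $\max_i |\lambda_i(A)| \leq \delta\sqrt{n}/c$.

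Next I would apply Proposition~\ref{modelA} with the particular choice
$$t = \frac{\delta\sqrt{n}}{c} - 3\sqrt{k}.$$
The hypothesis $\delta\sqrt{n} \geq 3c\sqrt{k}$ ensures $t \geq 0$, so the proposition applies, and since $k \geq k_0$, it gives
$$\Pr\!\left[\max_i |\lambda_i(A)| \geq \frac{\delta\sqrt{n}}{c}\right] = \Pr\!\left[\max_i |\lambda_i(A)| \geq 3\sqrt{k} + t\right] \leq 4 e^{-t^2/32}.$$
Taking the complementary event yields the announced lower bound on the probability.

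This is essentially a one-line reduction to Proposition~\ref{modelA}: there is no genuine obstacle, only the bookkeeping of identifying the right value of $t$ and the translation between the spectrum of $B$ and that of $A$. The only subtlety worth flagging is that the hypothesis $\delta\sqrt{n} \geq 3c\sqrt{k}$ is exactly what is needed to make the shift $t$ nonnegative so that Proposition~\ref{modelA} can be invoked.
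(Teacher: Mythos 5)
Your argument is correct and matches the paper's proof exactly: both write $B = \Id_k + cA/\sqrt{n}$ with $A$ distributed as in Model~A, then invoke Proposition~\ref{modelA} with $t = \delta\sqrt{n}/c - 3\sqrt{k}$, using the hypothesis $\delta\sqrt{n} \geq 3c\sqrt{k}$ to guarantee $t \geq 0$.
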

\begin{proof}
We have $B=\Id_k + cA/\sqrt{n}$, where $A$ follows the model of Proposition~\ref{modelA}. The result therefore follows from that proposition by choosing
$t$ so that $c(3\sqrt{k}+t)/\sqrt{n}=\delta$, i.e., 
$t=\delta\sqrt{n}/c-3\sqrt{k}$.
\end{proof}
In the next corollary we look at the case $n=k$ of this model.
\begin{corollary} \label{modelB'}
Assume  that $n \geq k_0$ 
and  $3c < 1$.
Then $B$ is positive semi-definite with probability at least
$$1-4\exp[-(1/c -3)^2 n/32].$$
\end{corollary}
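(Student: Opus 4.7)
The plan is to deduce Corollary~\ref{modelB'} as a direct specialization of Corollary~\ref{modelB}. First I would observe that $B$ is positive semi-definite precisely when all its eigenvalues lie in $[0,+\infty)$, and in particular this is implied by the stronger condition that every eigenvalue of $B$ lies in $[1-\delta, 1+\delta]$ with $\delta = 1$. So the natural move is to invoke Corollary~\ref{modelB} with the parameter choice $\delta = 1$ and $n = k$ (this is the case of the model singled out in the statement).

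Next I would check that the hypotheses of Corollary~\ref{modelB} are satisfied under the hypotheses of Corollary~\ref{modelB'}. The condition $k \geq k_0$ holds because $n = k$ and $n \geq k_0$ by assumption. The condition $\delta \sqrt{n} \geq 3c\sqrt{k}$ becomes, after substituting $\delta=1$ and $n=k$, the inequality $\sqrt{n} \geq 3c \sqrt{n}$, which is equivalent to $3c \leq 1$ and is supplied by the assumption $3c < 1$.

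Finally, I would plug the same substitution into the probability estimate produced by Corollary~\ref{modelB}. The exponent $-(\delta \sqrt{n}/c - 3\sqrt{k})^2/32$ collapses, upon setting $\delta=1$ and $k=n$, to $-(\sqrt{n}/c - 3\sqrt{n})^2/32 = -(1/c - 3)^2 n/32$, giving exactly the lower bound $1 - 4\exp[-(1/c - 3)^2 n/32]$ on the probability that all eigenvalues of $B$ lie in $[0, 2]$, and hence that $B$ is positive semi-definite.

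There is essentially no obstacle: the result is a one-line specialization, and the only thing to verify carefully is that the strict inequality $3c < 1$ is enough to both activate Corollary~\ref{modelB} (which only needs $3c \leq 1$) and to make the exponent $(1/c-3)^2 n / 32$ strictly positive so that the stated bound is nontrivial.
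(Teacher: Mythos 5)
Your proposal is correct and matches the paper's proof exactly: the paper's argument is the one-line instruction to set $n=k$ and $\delta=1$ in Corollary~\ref{modelB}, which is precisely the specialization you carry out and verify.
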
 
\begin{proof}
Set $n=k$ and $\delta=1$ in Corollary~\ref{modelB}.
\end{proof}
In the last result of this subsection we consider again the 
model $B=\Id_n + cA/\sqrt{n}$. 
Given a $n \times n$ matrix $M$ 
and two subsets $S,T \subseteq \{1,\ldots,n\}$ of size $k$, 
let us denote by $M_{S,T}$ the $k \times k$ sub-matrix made up of all entries
of $M$ of row number in $S$ and column number in $T$.
\begin{corollary} \label{modelB''}
Consider the random matrix 
$B=\Id_n + cA/\sqrt{n}$ where $A$ is drawn from
the uniform distribution on the set $n \times n$ symmetric matrices
with null diagonal entries and $\pm 1$ off-diagonal entries.

If $n \geq k \geq k_0$, then with probability at least 
 $$1-4\exp\left[k \ln (ne/k)-( \frac{\delta\sqrt{n}}{c} 
-3\sqrt{k})^2/32\right]$$
the submatrices $B_{S,S}$ have all their eigenvalues in the interval 
$[1-\delta,1+\delta]$ for all subsets $S\subseteq \{1,\ldots,n\}$ of size $k$.
\end{corollary}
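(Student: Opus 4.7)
The plan is to reduce Corollary~\ref{modelB''} to Corollary~\ref{modelB} by a union bound over all $\binom{n}{k}$ principal submatrices of size~$k$.

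First I would observe that for any fixed subset $S \subseteq \{1,\ldots,n\}$ of size $k$, the submatrix $B_{S,S}$ has exactly the distribution studied in Corollary~\ref{modelB}: its diagonal entries are~$1$, and its off-diagonal entries are $c/\sqrt{n}$ times independent symmetric Bernouilli variables $a_{ij}$ (with $i,j\in S$, $i<j$). Notice that the scaling factor is still $c/\sqrt{n}$ (not $c/\sqrt{k}$) because the parameter $n$ in Corollary~\ref{modelB} is an arbitrary parameter independent of the matrix size. Thus Corollary~\ref{modelB} applies to $B_{S,S}$ with the given values of $\delta$, $c$, $n$ and $k$, and under the hypothesis $\delta\sqrt{n} \geq 3c\sqrt{k}$ (which is implicit in the bound, since otherwise the conclusion is trivially true when the exponent is nonnegative) it yields
\[
\Pr\bigl[\,\lambda(B_{S,S}) \not\subseteq [1-\delta,1+\delta]\,\bigr] \;\leq\; 4\exp\!\left[-\bigl(\tfrac{\delta\sqrt{n}}{c} - 3\sqrt{k}\bigr)^{2}/32\right].
\]

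Next I would apply the union bound over all $\binom{n}{k}$ subsets of size $k$. Using the standard estimate $\binom{n}{k} \leq (ne/k)^{k} = \exp(k\ln(ne/k))$, the probability that at least one submatrix $B_{S,S}$ has an eigenvalue outside $[1-\delta,1+\delta]$ is at most
\[
\binom{n}{k}\cdot 4\exp\!\left[-\bigl(\tfrac{\delta\sqrt{n}}{c} - 3\sqrt{k}\bigr)^{2}/32\right] \;\leq\; 4\exp\!\left[k\ln(ne/k) - \bigl(\tfrac{\delta\sqrt{n}}{c} - 3\sqrt{k}\bigr)^{2}/32\right],
\]
which is exactly the failure probability claimed in the statement.

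There is really no hard step here: the argument is a routine union bound, and Corollary~\ref{modelB} does all of the probabilistic work. The only thing to be careful about is the identification of the distribution of $B_{S,S}$. Since $A$ is drawn uniformly from symmetric $\{0,\pm 1\}$-matrices with zero diagonal, any principal submatrix $A_{S,S}$ is distributed as a $k\times k$ Model A matrix, and so $B_{S,S}=\Id_{k} + cA_{S,S}/\sqrt{n}$ falls exactly under the hypotheses of Corollary~\ref{modelB} (with the same $c$ and $n$, just a smaller matrix dimension $k$). The hypothesis $k \geq k_0$ is needed precisely to invoke Corollary~\ref{modelB}, and $n\geq k$ guarantees that subsets of size $k$ exist.
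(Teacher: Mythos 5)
Your proof is correct and follows exactly the same approach as the paper: invoke Corollary~\ref{modelB} for each fixed $S$ (noting that the parameter $n$ in that corollary is decoupled from the matrix size~$k$), and then take a union bound over the $\binom{n}{k} \leq (ne/k)^k$ subsets. Your remark on the implicit hypothesis $\delta\sqrt{n}\geq 3c\sqrt{k}$ is a useful clarification, but otherwise the argument is identical to the paper's.
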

\begin{proof}
By Corollary~\ref{modelB}, for each fixed $S$ 
matrix $B_{S,S}$ has an eigenvalue outside of the interval 
$[1-\delta,1+\delta]$ with probability at most
$4\exp[-( \frac{\delta\sqrt{n}}{c} -3\sqrt{k})^2/32].$
The result follows by taking a union bound over the 
$\binom{n}{k} \leq (ne/k)^k$ subsets of size $k$.
\end{proof}
%
\subsection{Model C} \label{sectionC}
%
In Corollaries~\ref{modelB'} and~\ref{modelB''}
 we considered the following random model for $B$:
set $B = \Id_n + c A/\sqrt{n}$, where $A$ is chosen from the uniform distribution on the set $S_n$ of all symmetric matrices with null diagonal entries
and $\pm 1$ off-diagonal entries.
If $B$ is positive semi-definite, we can find by Cholesky decomposition 
a $n \times n$
matrix $C$ such that $C^TC=B$.
If $B$ is not positive semi-definite, we set $C=0$.
This is the random model for $C$ that we study in this subsection.
\begin{proposition} \label{modelC}
Assume that $n \geq k \geq k_0$ 
and that 
$3c < \min(1, \delta\sqrt{n}/\sqrt{k})$.
With probability at least
 $$1-4\exp\left[k \ln (ne/k)-(\frac{\delta\sqrt{n}}{c} 
-3\sqrt{k})^2/32\right]-4\exp[-(1/c -3)^2 n/32],$$
$C$ satisfies the RIP of order $k$ with parameter $\delta$.
\end{proposition}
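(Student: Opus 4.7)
The plan is to translate the RIP condition on $C$ into an eigenvalue condition on the principal $k\times k$ submatrices of $B$, and then combine the two tail bounds already established in Corollaries~\ref{modelB'} and~\ref{modelB''} via a union bound.

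First I would observe that whenever $B$ is positive semi-definite and $C^{T}C=B$, any $k$-sparse vector $x$ with support $S\subseteq\{1,\ldots,n\}$ of size at most $k$ satisfies $\|Cx\|^{2}=x^{T}Bx=x_{S}^{T}B_{S,S}x_{S}$, where $x_{S}$ is the restriction of $x$ to $S$ and $B_{S,S}$ is the corresponding principal submatrix of $B$. Since $\|x\|^{2}=\|x_{S}\|^{2}$, the RIP of order $k$ with parameter $\delta$ for $C$ is equivalent to the statement that every eigenvalue of $B_{S,S}$ lies in $[1-\delta,1+\delta]$ for every $S\subseteq\{1,\ldots,n\}$ of size exactly $k$ (vectors with support of size less than $k$ are handled by enlarging $S$ arbitrarily, or simply by noting that $1-\delta\leq 1+\delta$ also controls any principal submatrix of a compliant $B_{S,S}$).

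Next, two bad events must be ruled out: (i) $B$ fails to be positive semi-definite, so that by definition $C=0$ and no RIP conclusion is available; (ii) some submatrix $B_{S,S}$ with $|S|=k$ has an eigenvalue outside $[1-\delta,1+\delta]$. The hypothesis $3c<1$ places us in the regime of Corollary~\ref{modelB'}, which bounds the probability of (i) by $4\exp[-(1/c-3)^{2}n/32]$. The hypothesis $3c<\delta\sqrt{n}/\sqrt{k}$, together with $n\geq k\geq k_{0}$, places us in the regime of Corollary~\ref{modelB''}, which bounds the probability of (ii) by $4\exp[\,k\ln(ne/k)-(\delta\sqrt{n}/c-3\sqrt{k})^{2}/32\,]$.

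A union bound over these two events then yields exactly the announced probability estimate. There is no real obstacle: the reduction from RIP of $C$ to eigenvalue containment for all $B_{S,S}$ is the one-line Cholesky identity $C^{T}C=B$ restricted to the support of $x$, and the quantitative tail bounds have already been assembled in the earlier corollaries, so the proof essentially amounts to recording the two applicability conditions and combining them through the union bound.
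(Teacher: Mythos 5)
Your proposal is correct and follows essentially the same route as the paper's own proof: decompose into the two bad events (failure of positive semi-definiteness, handled by Corollary~\ref{modelB'}, and failure of the eigenvalue containment for some $B_{S,S}$ with $|S|=k$, handled by Corollary~\ref{modelB''}), use the Cholesky identity $C^TC=B$ to reduce the RIP of $C$ to the eigenvalue condition on the principal submatrices of $B$, and conclude by a union bound. The paper's own writeup is slightly more terse (and contains a small wording slip, saying the eigenvalue condition ``can happen'' rather than ``can fail'' with the stated probability), but the content is identical.
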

\begin{proof}
%
If $B = \Id_n + c A/\sqrt{n}$ is not positive semi-definite then $C=0$
and this matrix obviously does not satisfy the RIP.
By Corollary~\ref{modelB'}, $B$ can fail to be positive semi-definite
with probability at most $4\exp[-(1/c -3)^2 n/32].$
If $B$ is positive semi-definite then $C^TC=B$.
Using the notation of Corollary~\ref{modelB''}, 
matrix $C$ satisfies the RIP of order $k$ with parameter $\delta$ 
if for all subsets $S$ of size $k$,
the eigenvalues of the $k \times k$ matrices $(C^TC)_{S,S}$
all lie in the interval $[1-\delta,1+\delta]$.
Since $C^TC=B$, by Corollary~\ref{modelB''} this can happen with probability 
at most $4\exp[k \ln (ne/k)-( \frac{\delta\sqrt{n}}{c} 
-3\sqrt{k})^2/32].$
\end{proof}
%
\section{Dense Subgraph Problems} \label{subgraphs}
%
In this section we formulate precisely our two hypotheses on the intractability of dense subgraph problems.
We show that they are compatible with what follows from the spectral algorithm (see Theorem~\ref{spectral_algo} below).
Spectral methods have become classical in graph theory, see for instance~\cite{AKS98} for a nontrivial application to 
the detection of large hidden cliques in random graphs.
We work with the standard $G(n,1/2)$ random graph model: each of the 
$n(n-1)/2$ potential edges is chosen with probability $1/2$, and these choices
are made independently.
\begin{definition}
We say that a graph $H$ on $k$ vertices has density $\lambda$ if it has at least $\lambda k(k-1)/2$ edges. 
For $\epsilon \geq 0$, we say that $G$ has excess $\epsilon$ 
if it has density $1/2+\epsilon$.
\end{definition}
\begin{lemma}
In a random graph on $n$ vertices, the probability of existence of a
$k$-subgraph with excess $\epsilon$ is upper bounded by
$$\exp[k \ln (ne/k) -\epsilon^2k(k-1)].$$
\end{lemma}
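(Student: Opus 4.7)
The plan is a standard union bound combined with a Chernoff/Hoeffding tail bound for binomial random variables. First I would fix a subset $S\subseteq\{1,\ldots,n\}$ of size $k$ and let $E_S$ denote the number of edges of the random graph $G$ whose endpoints both lie in $S$. Under the $G(n,1/2)$ model, $E_S$ is distributed as a Binomial$(\binom{k}{2},1/2)$ random variable with mean $\binom{k}{2}/2 = k(k-1)/4$. The event that the induced subgraph on $S$ has excess $\epsilon$ is precisely the event
\[
E_S \geq \left(\tfrac12+\epsilon\right)\binom{k}{2} = \frac{k(k-1)}{4} + \epsilon\cdot\frac{k(k-1)}{2}.
\]

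Next I would apply Hoeffding's inequality to control this upper tail. Writing $E_S$ as a sum of $\binom{k}{2}$ independent $\{0,1\}$ random variables, Hoeffding gives
\[
\Pr\left[E_S \geq \left(\tfrac12+\epsilon\right)\binom{k}{2}\right] \leq \exp\!\left(-2\epsilon^2 \binom{k}{2}\right) = \exp\!\left(-\epsilon^2 k(k-1)\right).
\]

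Finally I would take a union bound over the $\binom{n}{k}$ possible choices of $S$. Using the standard estimate $\binom{n}{k}\leq (ne/k)^k$, the probability of existence of any $k$-subgraph with excess $\epsilon$ is at most
\[
\binom{n}{k}\exp\!\left(-\epsilon^2 k(k-1)\right) \leq \exp\!\left[k\ln(ne/k) - \epsilon^2 k(k-1)\right],
\]
which is exactly the claimed bound. There is no real obstacle here: the only choices are (i) which concentration inequality to invoke, and (ii) which form of the binomial estimate $\binom{n}{k}\leq (ne/k)^k$ to use; both are textbook, and the numerical constants line up directly with the form stated in the lemma.
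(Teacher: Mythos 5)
Your proof is correct and follows essentially the same route as the paper: fix a $k$-subset, note the edge count is a sum of $\binom{k}{2}$ independent Bernoulli$(1/2)$ variables, apply Hoeffding to get $\exp(-\epsilon^2 k(k-1))$, then union bound over $\binom{n}{k}\leq (ne/k)^k$ subsets. The only difference is that you spell out the Hoeffding computation explicitly, which the paper leaves implicit.
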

\begin{proof}
Fix any subgraph $H$ on $k$ vertices. The number of edges in $H$ is a 
sum of $k(k-1)/2$ independent Bernouilli random variables.
Therefore, by Hoeffding's inequality the probability for $H$ to have excess $\epsilon$ can be upper bounded by
$\exp[-\epsilon^2k(k-1)].$
The result follows by taking a union bound over the 
$\binom{n}{k} \leq (ne/k)^k$ subgraphs of size $k$.
\end{proof}
This probability goes rapidly to 0 once $\epsilon$ becomes significantly
larger than $1/\sqrt{k}$. For instance:
\begin{corollary} \label{typical}
Fix any constant $c>0$. For $\epsilon \geq (1+c)\sqrt{\ln(ne/k)/k}$,
the probability that a random graph on $n$ vertices has a $k$-subgraph with excess $\epsilon$ goes to 0 as $k,n \rightarrow +\infty$.
\end{corollary}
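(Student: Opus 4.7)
The approach is to apply the preceding lemma directly and chase the exponent. The lemma gives the upper bound
\[
\Pr[\exists \text{ such } k\text{-subgraph}] \leq \exp\bigl[k\ln(ne/k) - \epsilon^2 k(k-1)\bigr],
\]
so it suffices to show that under the hypothesis $\epsilon \geq (1+c)\sqrt{\ln(ne/k)/k}$ the exponent tends to $-\infty$ as $k,n \to \infty$.

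First, I would plug in the hypothesis: squaring gives $\epsilon^2 k(k-1) \geq (1+c)^2 (k-1) \ln(ne/k)$, so the exponent is at most
\[
\ln(ne/k)\,\bigl[k - (1+c)^2(k-1)\bigr].
\]
Since $(1+c)^2 > 1+2c$, for $k$ large enough the bracket is bounded above by $-ck$ (say). This uses only that $(1+c)^2(1-1/k) \to (1+c)^2 > 1+c$, so eventually the bracket is at most $k\bigl[1 - (1+c)\bigr] = -ck$.

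Next, I would observe that since $k \leq n$ we have $\ln(ne/k) \geq 1$, so the exponent is at most $-ck$. Thus the probability is bounded by $\exp(-ck)$, which goes to $0$ as $k \to \infty$.

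There is no real obstacle here; the argument is a one-line substitution into the lemma plus a crude bound of $\ln(ne/k)$ from below by~$1$. The only minor subtlety is handling the $(k-1)$ versus $k$ in the exponent, which is absorbed by choosing $k$ large enough to replace $(1+c)^2(1-1/k)$ by any fixed constant strictly larger than~$1$; the factor of $c$ slack built into the hypothesis makes this trivially possible.
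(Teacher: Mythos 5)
The paper states this corollary without proof, treating it as an immediate consequence of the preceding lemma; your argument fills in the substitution correctly and is the natural one.

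One small expository wrinkle: you write ``Since $(1+c)^2 > 1+2c$, for $k$ large enough the bracket is bounded above by $-ck$,'' but then the step you actually execute uses only $(1+c)^2 > 1+c$. The invocation of $1+2c$ is a loose end; drop it or replace it with $1+c$ for consistency. The rest is sound: from $\epsilon^2 k(k-1) \geq (1+c)^2(k-1)\ln(ne/k)$ the exponent is at most $\ln(ne/k)\bigl[k - (1+c)^2(k-1)\bigr]$; the bracket is $k\bigl[1 - (1+c)^2(1-1/k)\bigr]$, which is eventually $\leq -ck$ because $(1+c)^2(1-1/k) \to (1+c)^2 > 1+c$; and since $k\leq n$ gives $\ln(ne/k)\geq 1$, multiplying a quantity $\geq 1$ by a negative quantity $\leq -ck$ yields something $\leq -ck$. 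The bound $\exp(-ck)\to 0$ as $k\to\infty$ then finishes the proof.
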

When $\epsilon$ is in the range of Corollary~\ref{typical}, it makes sense
to consider algorithm which will certify that most graphs on $n$ vertices
do not have any $k$-subgraph with excess $\epsilon$ (i.e., a certified graph
should not have any  $k$-subgraph with excess $\epsilon$, and most 
graphs should be certified).

As we shall now see (in Theorem~\ref{spectral_algo}), the spectral method provides
an efficient certification algorithm if 
$\epsilon k / \sqrt{n}$ is sufficiently large.
\begin{lemma} \label{expansion}
Let $A$ be the signed adjacency matrix of a graph $G$.
If $G$ has a $k$-subgraph $H$ with excess $\epsilon$ then 
there is a unit vector $x \in \rr^n$ supported by $k$ basis vectors 
such that $x^TAx \geq 2\epsilon(k-1)$.
\end{lemma}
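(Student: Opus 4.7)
The plan is to exhibit an explicit test vector, namely the normalized indicator of the vertex set of $H$, and then compute the quadratic form directly from the edge count.

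Let $S \subseteq \{1,\ldots,n\}$ with $|S| = k$ be the vertex set of the dense subgraph $H$, and take $x$ to be the vector with $x_i = 1/\sqrt{k}$ for $i \in S$ and $x_i = 0$ otherwise. This $x$ is a unit vector supported on $k$ basis vectors, so the support condition is immediate. It then remains only to lower bound $x^T A x$.

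The key computation is to rewrite $x^T A x = \frac{1}{k}\sum_{i,j \in S} A_{ij}$, and then use the fact that $A_{ii}=0$ together with the symmetry $A_{ij}=A_{ji}$ to obtain $x^T A x = \frac{2}{k}\sum_{i<j,\ i,j\in S} A_{ij}$. Because $A$ is the signed adjacency matrix, each off-diagonal entry contributes $+1$ on an edge of $H$ and $-1$ on a non-edge, so if $e$ denotes the number of edges of $H$, then $\sum_{i<j,\ i,j\in S} A_{ij} = e - \bigl(\binom{k}{2}-e\bigr) = 2e - \binom{k}{2}$. Substituting the density hypothesis $e \geq (1/2+\epsilon)k(k-1)/2$ gives $2e - \binom{k}{2} \geq \epsilon k(k-1)$, and dividing by $k/2$ yields the desired bound $x^T A x \geq 2\epsilon(k-1)$.

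There is no real obstacle here; the statement is essentially a direct unpacking of the definitions of ``signed adjacency matrix'' and ``excess $\epsilon$''. The only minor point to be careful about is the sign convention on the diagonal (the $A_{ii}=0$ versus $\pm 1$ choice) and making sure the factor of two from the symmetric double sum is handled correctly so that the constant $2\epsilon(k-1)$ comes out exactly, rather than $\epsilon(k-1)$ or $2\epsilon k$.
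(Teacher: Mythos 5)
Your proof is correct and follows exactly the paper's approach: the same test vector $x_i = 1/\sqrt{k}$ on the vertices of $H$, the same reduction to the signed edge count over $H$, and the same arithmetic; you have merely spelled out the intermediate steps that the paper compresses into one line.
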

\begin{proof}
Assume without loss of generality that $V(G)=\{1,\ldots,n\}$ 
and $V(H)=\{1,\ldots,k\}$. 
We have
$a_{ii}=0$ and for $i \neq j$, $a_{ij}=1$ if $ij \in E$; 
$a_{ij}=-1$ if $ij {\not \in} E$.
Let $x \in \rr^n$ be the unit vector 
uniformly supported by the first $k$ basis vectors: 
$x_i=1/\sqrt{k}$ for $i \leq k$, $x_i=0$ for $i>k$.
Then $x^TAx=2\sum_{1 \leq i<j \leq k}a_{ij}/k \geq 2\epsilon(k-1)$. 
\end{proof}
\begin{corollary} \label{dense_lambda}
Under the same hypothesis as in Lemma~\ref{expansion}, 
the largest eigenvalue of $A$ satisfies $\lambda_1(A) \geq 2\epsilon(k-1)$.
\end{corollary}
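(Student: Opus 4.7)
The plan is to derive this corollary almost immediately from Lemma~\ref{expansion} by applying the Rayleigh quotient characterization of the largest eigenvalue of a symmetric matrix. Since $A$ is the signed adjacency matrix of $G$, it is real symmetric, so the Courant--Fischer theorem (or equivalently the variational characterization of eigenvalues) gives
\[
\lambda_1(A) \;=\; \max_{\substack{y \in \rr^n \\ \|y\|=1}} y^T A y.
\]

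The hypothesis and conclusion of Lemma~\ref{expansion} supply exactly what is needed: a unit vector $x \in \rr^n$ with $x^T A x \geq 2\epsilon(k-1)$. Plugging this vector into the Rayleigh quotient gives the claimed bound $\lambda_1(A) \geq x^T A x \geq 2\epsilon(k-1)$. The fact that $x$ happens to be supported on only $k$ coordinates is irrelevant at this stage; it mattered in the proof of Lemma~\ref{expansion} (because we could then compute $x^T A x$ in terms of the edges of $H$), but for the eigenvalue lower bound we only need $x$ to be a unit vector in $\rr^n$.

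There is essentially no obstacle here: the result is just the observation that any explicit test vector provides a lower bound on $\lambda_1$ via the Rayleigh quotient. The only mild subtlety worth noting in the write-up is the symmetry of $A$, which follows from $a_{ij}=a_{ji}$ in the definition of the signed adjacency matrix used in Lemma~\ref{expansion}.
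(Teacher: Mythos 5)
Your proof is correct and matches the paper's argument exactly: both invoke the variational (Rayleigh quotient) characterization $\lambda_1(A) = \sup_{\|x\|=1} x^T A x$ for symmetric $A$ and plug in the test vector supplied by Lemma~\ref{expansion}. The paper simply states this in one line; your write-up spells out the same reasoning.
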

\begin{proof}
This follows directly from the lemma since for a symmetric matrix,
$\lambda_1(A) = \sup_{||x||=1} x^TAx$.
\end{proof}
This leads to the following polynomial-time certification algorithm.
\begin{algorithm}{}
	\caption{}\label{alg:spectral}
\begin{algorithmic}[1]
\Procedure{Spectral}{$G$}
\State {\bf Input:} a graph $G$ on $n$ vertices.
\State Compute $\lambda_1(A)$, where $A$ is the signed adjacency matrix
of $G$.
\State {\bf Output:} Certify that $G$ does not contain any $k$-subgraph with excess $\epsilon$ for all $k,\epsilon$ such that $2\epsilon(k-1) > \lambda_1(A)$.
\EndProcedure 
\end{algorithmic}
\end{algorithm}
%
%
\begin{theorem} \label{spectral_algo}
The spectral algorithm presented above 
certifies that most graphs on $n$ vertices do not
contain any $k$-subgraph with excess $\epsilon$ for all $k,\epsilon$ such that
$$2\epsilon(k-1) \geq 3 \sqrt{n}.$$
\end{theorem}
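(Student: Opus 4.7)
The plan is to separate correctness of the algorithm from its typical behavior. Correctness is immediate: by Corollary~\ref{dense_lambda}, the presence of a $k$-subgraph with excess $\epsilon$ forces $\lambda_1(A) \geq 2\epsilon(k-1)$, so whenever the algorithm outputs a certification (i.e.\ whenever $2\epsilon(k-1) > \lambda_1(A)$), no such subgraph can exist. The real content of the theorem is therefore a concentration statement: for a random $G(n,1/2)$ graph, $\lambda_1(A) < 3\sqrt{n}$ with probability tending to $1$ as $n \to \infty$.

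To establish this, I would first observe that when $G$ is drawn from $G(n,1/2)$, its signed adjacency matrix $A$ is \emph{exactly} an instance of Model~A of Section~\ref{eigenvalues} with $n$ in place of $k$: the diagonal entries vanish, and each off-diagonal entry $a_{ij}$ ($i<j$) is an independent symmetric $\pm 1$ Bernoulli variable (with $a_{ij}=1$ iff $ij\in E$). Proposition~\ref{modelA} then yields a concentration bound on $\lambda_1(A)$, but naively setting $t=0$ in that bound is useless. The key move is to invoke Remark~\ref{constant3}, which lets us replace the constant $3$ by any $c' > 2$; taking, say, $c'=5/2$ gives, for all sufficiently large $n$ and all $t\geq 0$, the inequality $\Pr[\lambda_1(A) \geq (5/2)\sqrt{n}+t] \leq 4e^{-t^2/32}$.

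Choosing $t=(1/2)\sqrt{n}$ then yields $\Pr[\lambda_1(A) \geq 3\sqrt{n}] \leq 4 e^{-n/128}$, so for asymptotically almost every graph on $n$ vertices one has $\lambda_1(A) < 3\sqrt{n}$. For any such input $G$ and any pair $(k,\epsilon)$ satisfying $2\epsilon(k-1) \geq 3\sqrt{n}$, one obtains $2\epsilon(k-1) \geq 3\sqrt{n} > \lambda_1(A)$, so the certification criterion of the spectral algorithm is met. The only subtle point in the argument is precisely the use of Remark~\ref{constant3}: without the freedom to shave the leading constant from $3$ down to some $c'<3$, there would be no slack to absorb into $t$, and the concentration estimate would degenerate to a trivial bound.
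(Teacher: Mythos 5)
Your proof is correct, and its structure mirrors the paper's: correctness comes from Corollary~\ref{dense_lambda}, and the remainder is an eigenvalue concentration statement for the signed adjacency matrix of a $G(n,1/2)$ graph. The only genuine divergence is in how that concentration is obtained. The paper simply cites F\"uredi--Koml\'os Theorem~2 directly, which already asserts that $\lambda_1(A) = (2+o(1))\sqrt{n}$ with probability $1-o(1)$, so $\lambda_1(A)<3\sqrt{n}$ for most graphs with no further work. You instead reassemble this from the paper's own Section~\ref{eigenvalues} machinery: the median bound from F\"uredi--Koml\'os (packaged into Proposition~\ref{modelA}), the Alon--Krivelevich--Vu concentration around the median, and Remark~\ref{constant3} to shave the leading constant below $3$ so there is slack to absorb into the deviation parameter $t$. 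You are right that this last step is essential; with the constant fixed at $3$, Proposition~\ref{modelA} gives nothing at $t=0$. Your route is slightly longer but more self-contained within the paper's framework, and as a bonus it yields an explicit exponential tail bound $\Pr[\lambda_1(A)\geq 3\sqrt{n}]\leq 4e^{-n/128}$, which is stronger than the bare ``most graphs'' assertion; the paper's direct citation is shorter but less quantitative. Both are valid.
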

\begin{proof}
The correctness of the algorithm follows from Corollary~\ref{dense_lambda}.
By Theorem~2 in~\cite{FK81} most graphs satisfy $\lambda_1(A) < 3\sqrt{n}$,
so most graphs are certified.
\end{proof}
\begin{remark} \label{constant3'}
The constant 3 in this theorem can be replaced by any constant larger than 2
(see also Remark~\ref{constant3}).
\end{remark}

In the case $\epsilon=1/2$, the spectral algorithm of Theorem~\ref{spectral_algo} certifies that most graphs on $n$ vertices do not
contain any $k$-clique for $k \geq 1+ 3\sqrt{n}$.

We put forward the following hypotheses.
\begin{hypothesis}\label{hyp:one}
Fix two arbitrary rational constants $\alpha, \beta \in ]0,1[$ such that
$\alpha-\beta < 1/2$. Set $\epsilon(n)=1/n^{\beta}$ and $k(n)=n^{\alpha}$ 
(we only consider values of $n$ such that $k(n) \in \nn$).

No polynomial time algorithm can certify that
most graphs on $n$ vertices do not have any $k(n)$-subgraph 
with excess $\epsilon(n)$.
\end{hypothesis}
\begin{hypothesis}\label{hyp:two}
There is a constant $\kappa \in ]0,1[$ such that the following holds: fix two arbitrary constants $\alpha \in ]0,1[$ and $\const >0$ with 
$\alpha \leq \kappa$ and 
$\alpha \const \leq \kappa$. 
Set $\epsilon(n)=\const /\sqrt{n}$ and $k(n)=\alpha n$ 
(again we only consider values of $n$ such that $k(n) \in \nn$).
Then no polynomial time algorithm can certify that
most graphs on $n$ vertices do not have any $k(n)$-subgraph 
with excess $\epsilon(n)$.
\end{hypothesis}
The constraints on the parameters $\alpha,\beta,\const$ in these two hypotheses
are meant to ensure consistency with what we know from the spectral algorithm:
in Hypothesis~\ref{hyp:one} we have $\epsilon k = n^{\alpha-\beta} < \sqrt{n}$. In Hypothesis~\ref{hyp:two} we have $\epsilon k = \alpha \const / \sqrt{n} \leq \kappa \sqrt{n} < \sqrt{n}$.
By Remark~\ref{constant3'}, if we allowed a value $\kappa > 1$ in Hypothesis~\ref{hyp:two}
then this hypothesis would become provably incorrect. 
Hypothesis~\ref{hyp:two} essentially means that the spectral algorithm can be improved
at most by a constant factor. As we will see later, both hypotheses are also consistent with the skewed algorithm of Section~\ref{sec:skewed_alg}.

Of course, our two  hypotheses are meaningful only for values of the parameters
for which it is actually the case 
that most graphs do not have a $k$-subgraph
with excess $\epsilon$ (otherwise, they are vacuously true).
By Corollary~\ref{typical}, for the first hypothesis it is enough 
to have $\beta < \alpha/2$. Therefore we can first pick any $\beta \in ]0,1/2[$ 
and then any $\alpha \in ]2\beta,\beta+1/2[$ (the upper bound $\beta+1/2$
enforces the constraint $\alpha-\beta < 1/2$ in Hypothesis~\ref{hyp:one}).

Let us now work out the corresponding constraints for Hypothesis~\ref{hyp:two}.
Take for instance $c=1$ in Corollary~\ref{typical}.
Plugging in $k = \alpha n$ and $\epsilon=\const/\sqrt{n}$ yields the condition
$$\const \geq 2\sqrt{\frac{\ln(e/\alpha)}{\alpha}}.$$
But we also have the constraint $\const \leq \kappa/\alpha$ from Hypothesis~\ref{hyp:two}.
So we should have
$$\kappa/\alpha \geq 2 \sqrt{ \frac{\ln(e/\alpha)}{\alpha}}.$$
This constraint is satisfied by all sufficiently small $\alpha$.
Hence we can first pick any small enough $\alpha$, and then any 
$\const$ in the interval
$[2\sqrt{\ln(e/\alpha)/\alpha},\kappa/\alpha].$
%
%
%
\section{A Skewed Spectral Algorithm}\label{sec:skewed_alg}
%
Let $G$ be an undirected graph on $n$ vertices and $\sign$ 
its signed adjacency matrix.
We define the skewed adjacency matrix of $G$ by the formula
\begin{equation} \label{skewdef}
\sk=\frac{\sign}{2}+\frac{a}{\sqrt{n}} \J, 
\end{equation}
where 
 $\J$  is the $n \times n$ all 1's matrix, and 
$a \geq 0$ a parameter to be tuned later. We call the matrix $\widehat{A}$ skewed because its entries are random variables with mean slightly above zero.
\begin{lemma} \label{eigendense}
If $G$ has a $k$-subgraph $H$ with excess $\epsilon$ then 
there is a unit vector $x \in \rr^n$ supported by $k$ basis vectors 
such that $x^T\sk x \geq \epsilon(k-1)+ka/\sqrt{n}$.
\end{lemma}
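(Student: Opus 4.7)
The plan is to reuse the test vector from the proof of Lemma~\ref{expansion} and exploit the linearity of the quadratic form $x \mapsto x^T\sk x$ with respect to the decomposition $\sk = \frac{1}{2}\sign + \frac{a}{\sqrt{n}}\J$ given by~(\ref{skewdef}).

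First I would assume without loss of generality, as in the proof of Lemma~\ref{expansion}, that $V(G)=\{1,\dots,n\}$ and $V(H)=\{1,\dots,k\}$, and I would take $x \in \rr^n$ to be the unit vector uniformly supported by the first $k$ basis vectors: $x_i = 1/\sqrt{k}$ for $i \leq k$ and $x_i = 0$ otherwise. Then, by linearity,
\[
x^T \sk x \;=\; \tfrac{1}{2}\, x^T \sign x \;+\; \tfrac{a}{\sqrt{n}}\, x^T \J x.
\]

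Second, I would plug in the two already-known estimates for these two pieces. For the first piece, Lemma~\ref{expansion} (with the very same vector $x$, which is how that lemma was actually proved) gives $x^T \sign x \geq 2\epsilon(k-1)$, so the contribution of the $\sign/2$ term is at least $\epsilon(k-1)$. For the second piece, since $\J = \onemtx \onemtx^T$ where $\onemtx$ is the all-ones vector, one has $x^T \J x = (\sum_i x_i)^2 = (k \cdot 1/\sqrt{k})^2 = k$, so the contribution of the $(a/\sqrt{n})\J$ term is exactly $ka/\sqrt{n}$. Adding the two contributions yields the claimed bound $x^T \sk x \geq \epsilon(k-1) + ka/\sqrt{n}$.

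There is no real obstacle here: the statement is really a one-line computation once the splitting~(\ref{skewdef}) and the uniform test vector of Lemma~\ref{expansion} are combined. The only thing worth checking is that the factor $1/2$ in front of $\sign$ in the definition of $\sk$ is correctly tracked, which is why the bound degrades from $2\epsilon(k-1)$ in Lemma~\ref{expansion} to $\epsilon(k-1)$ here, with the additional $ka/\sqrt{n}$ boost coming from the skew.
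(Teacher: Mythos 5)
Your proof is correct and follows exactly the paper's own argument: pick the same uniform test vector as in Lemma~\ref{expansion}, split $\sk$ via~(\ref{skewdef}) into $\sign/2$ and $(a/\sqrt{n})\J$, reuse $x^T\sign x \geq 2\epsilon(k-1)$, and compute $x^T\J x = k$. Nothing further to add.
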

\begin{proof}
Assume without loss of generality that $V(G)=\{1,\ldots,n\}$ 
and $V(H)=\{1,\ldots,k\}$. 
Let $x \in \rr^n$ be the unit vector 
uniformly supported by the first $k$ basis vectors: 
$x_i=1/\sqrt{k}$ for $i \leq k$, $x_i=0$ for $i>k$.
We have seen that $x^T\sign x \geq 2\epsilon (k-1)$. 
The conclusion follows since $x^T \J x = k$.
\end{proof}
This simple lemma forms the basis for the following {\em skewed algorithm}.
\begin{algorithm}{}
	\caption{}\label{alg:skewed_spectral}
\begin{algorithmic}[1]
\Procedure{Skewed Spectral}{$G$, $a$}
\State {\bf Input:} a graph $G$ on $n$ vertices.
\State Compute $\lambda_1(\sk)$, where as above $\sk$ 
is the skewed adjacency matrix of $G$.
\State {\bf Output:} Certify that $G$ does not contain any $k$-subgraph with excess $\epsilon$ for all $k,\epsilon$ such that
$\epsilon(k-1) + ka/\sqrt{n} > \lambda_1(\sk)$.
\EndProcedure 
\end{algorithmic}
\end{algorithm}
In the case $a=0$, this is just the standard spectral algorithm.
Since $\sk$ is symmetric, $\lambda_1(\sk) 
= \sup_{\norm{x}=1} x^T\sk x$.
The correctness of the skewed algorithm therefore follows 
from Lemma~\ref{eigendense}. 
In the remainder of this 
section, we analyze the behavior of this algorithm 
on random graphs.
\begin{theorem} \label{norm}
Let $G$ be a random graph on $n$ vertices and $\sk$ its skewed 
adjacency matrix as defined by~(\ref{skewdef}).
The expectation of the 
operator norm
$\norm{\sk}=\sup_{\norm{x}=1} \norm{\sk x}$ satisfies the inequality
\[ \EE \norm{\sk}  \leq \sqrt{n} \sqrt{a^2 +5/4 + o(1)}.\]
Moreover, for all $\epsilon\geq 0$ we have the concentration inequality
\[\Pr\left[\norm{\sk} \geq \EE \norm{\sk} + \epsilon \right]\leq 
\exp(-\epsilon^2 /8).\]
\end{theorem}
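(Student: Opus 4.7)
The plan is to establish a clean deterministic pointwise bound on $\|\widehat{A}\|^2$, take expectations using the moments of $A$ together with F\"uredi--Koml\'os' bound on $\|A\|$, and finish with Jensen's inequality; the concentration inequality will follow from a standard Talagrand-type estimate for convex Lipschitz functionals of Rademacher variables. Write $\widehat{A}=M+cuu^T$ with $M:=A/2$, $u:=\mathbf{1}/\sqrt{n}$, and $c:=a\sqrt{n}$, so that $(a/\sqrt{n})\mathbf{J}=cuu^T$. For any unit vector $x$, let $\alpha:=u^Tx$, $\beta:=u^TMx$, and $r:=\|P_{u^\perp}Mx\|$; the Pythagoras identity $\|Mx\|^2=\beta^2+r^2$ lets us rewrite
$$\|\widehat{A}x\|^2=\|Mx\|^2+2c\alpha\beta+c^2\alpha^2=(\beta+c\alpha)^2+r^2.$$

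The crucial observation is the identity $\beta+c\alpha=\langle (M+cI)u,\,x\rangle$, so Cauchy--Schwarz gives $(\beta+c\alpha)^2\le\|(M+cI)u\|^2$, while $r^2\le\|Mx\|^2\le\|M\|^2$. Since both bounds are $x$-independent,
$$\|\widehat{A}\|^2\le\|(M+cI)u\|^2+\|M\|^2=\|Mu\|^2+2c\,u^TMu+c^2+\|M\|^2.$$
Taking expectations, the identities $\mathbb{E} A=0$ and $\mathbb{E} A^2=(n-1)I$ give $\mathbb{E}(u^TMu)=0$ and $\mathbb{E}\|Mu\|^2=(n-1)/4$; the constant is $c^2=a^2n$; and $\mathbb{E}\|M\|^2\le n(1+o(1))$ by F\"uredi--Koml\'os combined with the Alon--Krivelevich--Vu concentration inequality. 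Summing gives $\mathbb{E}\|\widehat{A}\|^2\le(a^2+5/4+o(1))\,n$, and Jensen then yields $\mathbb{E}\|\widehat{A}\|\le\sqrt{n}\,\sqrt{a^2+5/4+o(1)}$.

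For the concentration inequality, the map $\vec a\mapsto\|\widehat{A}\|$ on $\{-1,+1\}^{\binom{n}{2}}$ is convex (the operator norm is convex in the matrix entries) and $(1/\sqrt{2})$-Lipschitz in the Euclidean metric: flipping a single $a_{ij}$ with $i<j$ changes $\widehat{A}$ by the symmetric matrix $\pm(e_ie_j^T+e_je_i^T)$ of operator norm $1$, while shifting $\vec a$ by Euclidean distance $2$. The announced $\exp(-\epsilon^2/8)$ tail then follows from Talagrand's convex distance inequality for Rademacher functionals. The most delicate ingredient is the bound $\mathbb{E}\|M\|^2\le n(1+o(1))$: F\"uredi--Koml\'os directly controls only $\mathbb{E}\|M\|$ (or its median), so one needs the variance estimate $\operatorname{Var}(\|A\|)=O(1)$ from Alon--Krivelevich--Vu to pass from $(\mathbb{E}\|M\|)^2$ to $\mathbb{E}\|M\|^2$ with the correct leading constant.
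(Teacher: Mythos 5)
Your argument is correct and, after unwinding, lands on the same deterministic core bound as the paper, namely $\|\widehat{A}\|^{2}\le\|\widehat{A}\,\mathbf{1}/\sqrt{n}\|^{2}+\|A/2\|^{2}$ (note $\|(M+cI)u\|=\|\widehat{A}u\|$), but it gets there and fills in the two probabilistic estimates differently. The paper decomposes the \emph{input} $x=\alpha\mathbf{1}/\sqrt{n}+\beta y$ with $y\perp\mathbf{1}$ and uses the triangle inequality plus Cauchy--Schwarz on $(|\alpha|,|\beta|)$ together with $\mathbf{J}y=0$; you instead decompose the \emph{output} $\widehat{A}x$ orthogonally along $\mathbf{1}$ and its complement, write $\|\widehat{A}x\|^{2}=(\beta+c\alpha)^{2}+r^{2}$, and bound each summand uniformly in $x$ --- a cleaner, fully deterministic derivation of the same inequality. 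For $\EE\|A/2\|^{2}\le n(1+o(1))$, the paper invokes Vu's trace estimate ($\EE\,\mathrm{tr}(A^{2k})\le c_{2}n(2\sqrt{n})^{2k}$ with $k\sim n^{\delta}$), whereas you combine the F\"uredi--Koml\'os first-moment/median bound with the Alon--Krivelevich--Vu $O(1)$ fluctuation bound to control the variance and pass from $(\EE\|A\|)^{2}$ to $\EE\|A\|^{2}$ with the sharp leading constant $4n$; both give the same $o(1)$ error, and your route has the merit of reusing exactly the tools already set up in Section~\ref{eigenvalues} rather than importing a separate trace-method lemma. One small caution on the concentration step: the single-coordinate-flip computation you sketch (operator-norm change $1$, Euclidean shift $2$) is a local ``bounded differences'' observation and does not by itself establish the Euclidean Lipschitz constant that the Ledoux/Talagrand convex-concentration theorem requires; the rigorous route is the one in the paper, $|f(\mathrm{vec}(A))-f(\mathrm{vec}(B))|\le\tfrac{1}{2}\|A-B\|_{\mathrm F}=\tfrac{1}{\sqrt{2}}\|\mathrm{vec}(A)-\mathrm{vec}(B)\|$, which in fact yields exactly your constant $1/\sqrt{2}$ (the paper rounds it up to $1$), so your claimed tail is correct and even slightly better than stated.
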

We defer the proof of the above theorem to the appendix. Compare this bound 
on $\EE \norm{\sk}$
with the bound 
$(a+1+o(1))\sqrt{n}$ that follows from applying the triangle inequality on $\widehat{A}$. A direct calculation shows that 
the bound in Theorem~\ref{norm} is better for $a>1/8$.

%
For any matrix, the largest eigenvalue is upper bounded by the operator norm.
We therefore have the following corollary.
\begin{corollary} \label{skew_bound}
For most graphs $G$ on $n$ vertices, the skewed adjacency matrix of $G$
satisfies the inequality 
$$\lambda_1(\sk) \leq \sqrt{n} \sqrt{a^2 +5/4+o(1)}.$$
\end{corollary}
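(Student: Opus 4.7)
The proof is essentially a direct consequence of Theorem~\ref{norm}: I just need to combine the two statements in it (the expectation bound on $\|\sk\|$ and the concentration inequality around this expectation), plus observe that $\lambda_1(\sk) \leq \|\sk\|$.

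The first step is to note that $\sk = A/2 + (a/\sqrt{n})\J$ is symmetric since both $A$ and $\J$ are symmetric. Hence $\|\sk\|$ equals the largest absolute value of an eigenvalue of $\sk$, so in particular $\lambda_1(\sk) \leq \|\sk\|$. Therefore it suffices to prove the announced bound with $\lambda_1(\sk)$ replaced by $\|\sk\|$.

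Next I apply Theorem~\ref{norm} with an appropriate choice of the deviation parameter. I would pick some $\epsilon_n = o(\sqrt{n})$ that nevertheless satisfies $\epsilon_n^2/8 \to \infty$ as $n \to \infty$; for concreteness, say $\epsilon_n = n^{1/4}$. Then the concentration inequality gives
\[
\Pr\bigl[\|\sk\| \geq \EE\|\sk\| + \epsilon_n\bigr] \leq \exp(-n^{1/2}/8) \to 0,
\]
so with probability tending to $1$ as $n \to \infty$ we have $\|\sk\| \leq \EE\|\sk\| + \epsilon_n$. Combined with the expectation bound of Theorem~\ref{norm}, this yields
\[
\|\sk\| \leq \sqrt{n}\sqrt{a^2 + 5/4 + o(1)} + n^{1/4} = \sqrt{n}\sqrt{a^2 + 5/4 + o(1)},
\]
since the additive $n^{1/4}$ term can be absorbed into the $o(1)$ under the square root (using $\sqrt{x+y} \leq \sqrt{x} + \sqrt{y}$ or equivalently $(A + \epsilon_n)^2 \leq n(a^2 + 5/4 + o(1))$ after expansion).

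There is no real obstacle here beyond bookkeeping with the $o(1)$ terms: the work has been done upstream in Theorem~\ref{norm}, whose proof is deferred to the appendix and is the genuinely substantive piece. In other words, Corollary~\ref{skew_bound} is the routine ``for most graphs'' corollary that converts an expectation-plus-concentration bound on $\|\sk\|$ into a high-probability upper bound on $\lambda_1(\sk)$.
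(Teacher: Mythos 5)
Your argument is correct and matches the paper's (implicit) proof: the paper simply notes that $\lambda_1(\sk)\leq\norm{\sk}$ and lets Theorem~\ref{norm} do the rest, while you fill in the standard bookkeeping (choosing $\epsilon_n = n^{1/4}$, applying the tail bound, and absorbing the additive term into the $o(1)$). The only cosmetic remark is that $\lambda_1(M)\leq\norm{M}$ already holds for any matrix $M$ (as the paper observes), so the symmetry of $\sk$ is not actually needed at this step.
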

Now, let $k$ and $\epsilon$ be two functions of $n$.
From Corollary~\ref{skew_bound} and the correctness of the skewed algorithm
it follows that whenever the inequality
\begin{equation} \label{liminf}
\liminf_{n \rightarrow + \infty} \frac{1}{\sqrt{n}}\left[\epsilon(n).(k(n)-1)
+\frac{ak(n)}{\sqrt{n}}\right] > \sqrt{a^2+5/4}
\end{equation}
is satisfied, the skewed algorithm can certify that most graphs on $n$ vertices
do not have any $k(n)$-subgraph with excess $\epsilon(n)$.
In particular, we have obtained the following result.
\begin{theorem} \label{skth}
Set $\epsilon(n) = \const / \sqrt{n}$ and $k(n)=\alpha n$, 
where $\alpha \in ]0,1[$ and $\const > 0$ are two constants. 
If $\alpha \const + \alpha a > \sqrt{a^2+5/4}$ 
then the skewed algorithm can certify that most graphs on $n$ vertices
do not have any $k(n)$-subgraph with excess $\epsilon(n)$.
\end{theorem}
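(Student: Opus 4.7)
The plan is straightforward: specialize the general sufficient condition~(\ref{liminf}) to the parameter choices $\epsilon(n)=\const/\sqrt{n}$ and $k(n)=\alpha n$, and check that the hypothesis $\alpha\const+\alpha a>\sqrt{a^2+5/4}$ yields it.

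First I would substitute into the bracketed expression in~(\ref{liminf}):
\[
\frac{1}{\sqrt{n}}\!\left[\epsilon(n)(k(n)-1)+\frac{a\,k(n)}{\sqrt n}\right]
=\frac{1}{\sqrt n}\!\left[\frac{\const}{\sqrt n}(\alpha n-1)+\frac{a\alpha n}{\sqrt n}\right]
=\alpha\const+\alpha a-\frac{\const}{n}.
\]
As $n\to\infty$ the error term $\const/n$ vanishes, so the $\liminf$ equals $\alpha\const+\alpha a$. The assumption $\alpha\const+\alpha a>\sqrt{a^2+5/4}$ is exactly inequality~(\ref{liminf}).

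Next I would invoke the earlier observation, based on Corollary~\ref{skew_bound} and the correctness of the skewed algorithm (which itself rests on Lemma~\ref{eigendense}), that whenever~(\ref{liminf}) holds the skewed algorithm certifies most graphs on $n$ vertices as containing no $k(n)$-subgraph of excess $\epsilon(n)$. This finishes the proof.

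There is no real obstacle here: the work has already been done in Theorem~\ref{norm} and Corollary~\ref{skew_bound}, and the present statement is just the asymptotic instantiation of the general sufficient condition at the scaling $\epsilon=\Theta(1/\sqrt n)$, $k=\Theta(n)$. The only thing to keep in mind is that ``most graphs'' here means a set of probability tending to $1$, which follows from the concentration inequality in Theorem~\ref{norm} (the deviation $\epsilon$ there can be chosen to grow slowly, e.g.\ as any function $\omega(1)$, so the probability that $\|\sk\|$ exceeds its expectation by that much is $o(1)$); this is already absorbed into the ``for most graphs'' statement of Corollary~\ref{skew_bound}.
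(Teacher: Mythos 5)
Your proposal is correct and follows essentially the same route as the paper: the paper derives the theorem precisely by substituting $\epsilon(n)=\const/\sqrt n$ and $k(n)=\alpha n$ into the general sufficient condition~(\ref{liminf}), which in turn rests on Corollary~\ref{skew_bound} and the correctness of the skewed algorithm. Your added remark about absorbing the deviation from the concentration inequality into the ``most graphs'' qualifier is accurate but already built into Corollary~\ref{skew_bound}.
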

For instance, taking $a=1$ yields the following.
\begin{corollary} \label{aequals1}
If $\alpha \const + \alpha  > 3/2$ 
then the skewed algorithm can certify that most graphs on $n$ vertices
do not have any $\alpha n$-subgraph with excess $\const/\sqrt{n}$.
\end{corollary}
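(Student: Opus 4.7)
The plan is straightforward: this corollary is simply Theorem~\ref{skth} specialized at the parameter choice $a = 1$. Under this substitution, the right-hand side of the hypothesis in Theorem~\ref{skth} becomes $\sqrt{a^2 + 5/4} = \sqrt{1 + 5/4} = \sqrt{9/4} = 3/2$, while the left-hand side $\alpha c + \alpha a$ collapses to $\alpha c + \alpha$. So the hypothesis $\alpha c + \alpha > 3/2$ of the corollary is literally the hypothesis of Theorem~\ref{skth} with $a=1$, and invoking that theorem yields the stated conclusion that, for most graphs on $n$ vertices, the skewed algorithm certifies the absence of an $\alpha n$-subgraph with excess $c/\sqrt{n}$.

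The only point worth checking is that running the skewed algorithm with $a=1$ is an admissible instantiation of Algorithm~\ref{alg:skewed_spectral}; this is immediate since $a \geq 0$ was the only constraint imposed on $a$ in~(\ref{skewdef}). There is no real obstacle here. One motivating remark I would include for the reader is that $a=1$ lies well above the threshold $a > 1/8$ beyond which, as noted right after Theorem~\ref{norm}, the skewed bound $\sqrt{a^2 + 5/4 + o(1)}\sqrt{n}$ strictly improves on the triangle-inequality bound $(a+1+o(1))\sqrt{n}$; this explains why $a=1$ is a natural default choice worth recording as a named corollary, even though any fixed $a$ in a broad range would yield an analogous statement with a different numerical threshold.
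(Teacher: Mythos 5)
Your proof is correct and is exactly the paper's argument: the paper introduces this corollary with the line ``For instance, taking $a=1$ yields the following,'' which is precisely your specialization of Theorem~\ref{skth} with $\sqrt{a^2+5/4}=\sqrt{9/4}=3/2$. The extra remarks about $a\geq 0$ being the only constraint and about $a=1$ sitting above the $a>1/8$ threshold are accurate but ancillary.
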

This is clearly an improvement over the standard spectral algorithm, since 
that algorithm is unable to certify that most graphs on $n$ vertices
do not have any $\alpha n$-subgraph with excess $\const/\sqrt{n}$
when $\alpha \const < 1$. But there are values $\alpha \in ]0,1[$ and
$\const > 0$ for which $\alpha \const < 1$ 
and the condition of Corollary~\ref{aequals1} is satisfied (take for instance
$\alpha = 3/4$ and $\const = 5/4$).
More generally, given values of $\alpha$ and $\const$ we can ask when 
there is a value of $a$ such that the skewed algorithm can certify
that most graphs on $n$ vertices
do not have any $\alpha n$-subgraph with excess $\const/\sqrt{n}$.
\begin{corollary} \label{skewcor}
If the condition $4 \alpha^2 \const^2 +5 \alpha^2 > 5$
is satisfied, the skewed algorithm can certify
that most graphs on $n$ vertices
do not have any $\alpha n$-subgraph with excess $\const/\sqrt{n}$.
\end{corollary}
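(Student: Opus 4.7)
The plan is to reduce Corollary~\ref{skewcor} to an optimization problem over the tuning parameter $a$ in Theorem~\ref{skth}. That theorem gives a one-parameter family of sufficient conditions, namely $\alpha\const + \alpha a > \sqrt{a^2 + 5/4}$, one for each $a \geq 0$. So I would show that some $a \geq 0$ makes this hold if and only if $4\alpha^2\const^2 + 5\alpha^2 > 5$, and then invoke Theorem~\ref{skth} with the optimal $a$.

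Concretely, I would define $h(a) = \alpha a - \sqrt{a^2 + 5/4}$ and compute $\sup_{a\geq 0} h(a)$. Since $\alpha \in ]0,1[$, calculus gives $h'(a) = \alpha - a/\sqrt{a^2 + 5/4}$, which vanishes precisely at
\[
a^\star = \frac{\alpha}{2}\sqrt{\frac{5}{1-\alpha^2}},
\]
which is positive and finite because $\alpha < 1$. A short computation shows $\sqrt{(a^\star)^2 + 5/4} = a^\star/\alpha$, whence
\[
h(a^\star) \;=\; \alpha a^\star - \frac{a^\star}{\alpha} \;=\; -\frac{1-\alpha^2}{\alpha}\,a^\star \;=\; -\frac{\sqrt{5(1-\alpha^2)}}{2}.
\]
Since $h''(a^\star) < 0$ this is indeed the maximum on $[0,+\infty[$.

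Therefore some $a \geq 0$ satisfies $\alpha\const + h(a) > 0$ if and only if
\[
\alpha\const \;>\; \frac{\sqrt{5(1-\alpha^2)}}{2},
\]
and squaring (both sides are nonnegative when the inequality is meaningful) rearranges exactly to $4\alpha^2\const^2 + 5\alpha^2 > 5$. Under this hypothesis, picking $a = a^\star$ makes Theorem~\ref{skth} applicable, and the conclusion follows. I do not anticipate any real obstacle here: the argument is a one-variable calculus optimization, and the only thing to double check is that the optimum is attained at a nonnegative $a$ (which it is, since $\alpha, \const > 0$), and that the strict inequality in Theorem~\ref{skth} is preserved (which it is, because $h$ attains its maximum at the interior point $a^\star$).
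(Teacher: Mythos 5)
Your proof is correct, but it uses a different strategy from the paper's. The paper observes that (squaring the condition of Theorem~\ref{skth}, valid since $\alpha\const + \alpha a > 0$) one needs $f(a) := a^2 + 5/4 - (\alpha\const + \alpha a)^2 < 0$ for some $a$; since $f$ is an upward-opening quadratic in $a$ with leading coefficient $1-\alpha^2 > 0$, this happens iff the discriminant is positive, and the discriminant computes directly to $4\alpha^2\const^2 + 5\alpha^2 - 5$. Your approach instead maximizes $h(a) = \alpha a - \sqrt{a^2 + 5/4}$ by calculus and asks when $\alpha\const + \max h > 0$. Both are one-variable optimizations over the tuning parameter $a$; the paper's discriminant argument is a bit shorter and purely algebraic, while yours has the advantage of producing the optimal value $a^\star = \frac{\alpha}{2}\sqrt{5/(1-\alpha^2)}$ explicitly (useful if one actually wants to run the algorithm), and it also makes explicit the check that the optimizer is nonnegative, which the paper leaves implicit. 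Either way the conclusion and the threshold condition agree.
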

\begin{proof}
In view of Theorem~\ref{skth}, we are looking for a value of $a$ 
such that $$f(a)=a^2+5/4-(\alpha \const + \alpha a)^2 < 0.$$
There is such an $a$ if $\Delta > 0$, where $\Delta$ is the discriminant 
of the quadratic polynomial $f(a)$. 
But $$\Delta = 4\alpha^4 \const^2 - 4(1-\alpha^2)(5/4-\alpha^2 \const^2) =
-5 + 4 \alpha^2 \const^2 + 5 \alpha^2.$$
\end{proof}
Note that this analysis of the skewed algorithm is 
suboptimal: in the case $a=0$ we can 
certify with the standard spectral algorithm that most graphs on $n$ vertices
do not have any $\alpha n$-subgraph with excess $\const/\sqrt{n}$ as soon
as $\alpha \const > 1$.
But this fact cannot be derived from Corollary~\ref{skewcor}.
In order to improve this corollary, one would need to improve the upper
bound on $\EE \norm{\sk}$ given by  Theorem~\ref{norm}.

It follows from Corollary~\ref{skewcor} that the constant $\kappa$ in Hypothesis~\ref{hyp:two} must satisfy the constraint $9\kappa^2 \leq 5$, i.e., 
$\kappa \leq \sqrt{5}/3 \simeq 0.745$.
Finally, let us consider the case where 
(as in Hypothesis~\ref{hyp:one})
the two functions $k(n)$ 
and $\epsilon(n)$ are of the form $k(n)=n^{\alpha}$ and 
$\epsilon(n)=1/n^{\beta}$ for two constants $\alpha,\beta \in ]0,1[$.
With those settings, the $\liminf$ in~(\ref{liminf}) is equal to
0 if $\alpha-\beta<1/2$, to 1 if $\alpha - \beta = 1/2$ and to $+\infty$ if
$\alpha-\beta > 1/2$. Hence we do not obtain any improvement over the
standard spectral algorithm (both algorithms succeed only in the case
$\alpha-\beta > 1/2$).
%
\section{Dense Subgraphs and the Restricted Isometry Property}
\label{hard_square}
%
In this section we show 
(in Theorems~\ref{hard_th1} and~\ref{hard_th2}) 
that RIP parameters are hard to approximate even for square matrices. 
We establish connections between dense subgraphs problems
and the RIP thanks to a generic reduction 
which we call the {\em Cholesky reduction}.
This reduction maps a graph $G$ on $n$ vertices to a $n \times n$ matrix
$C(G)$. Let $A$ be the signed adjacency matrix of $G$: we have
$a_{ii}=0$ and for $i \neq j$, $a_{ij}=1$ if $ij \in E$; 
$a_{ij}=-1$ if $ij {\not \in} E$.
We construct $C=C(G)$ from $A$ 
using the procedure described in Section~\ref{sectionC}.
That is, we first compute $B = \Id_n + c A/\sqrt{n}$ (suitable choices
for the parameter $c$ will be discussed later).
If $B$ is not positive semi-definite, we set $C=0$. Otherwise, we find
by Cholesky decomposition a matrix $C$ such that $C^TC=B$.

For suitable values of $k$, $C(G)$ satisfies the RIP of order $k$ for 
most graphs $G$. This was made precise in Proposition~\ref{modelC}.
On the other hand, we shall see that if $G$ has a $k$-subgraph that
is too dense then $C(G)$ does not satisfy the RIP of order $k$.
\begin{proposition} \label{excess}
Let $G$ be a graph on $n$ vertices and $C(G)$ its image by the Cholesky
reduction. If $G$ has a $k$-subgraph with excess $\epsilon$ 
and $$\delta<2c\epsilon(k-1)/\sqrt{n}$$ then $C(G)$ does not satisfy
the RIP of order $k$ with parameter $\delta$.
\end{proposition}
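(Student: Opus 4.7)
The plan is to unwind the definition of the RIP on the matrix $C=C(G)$ using the relation $C^T C = B = \Id_n + cA/\sqrt{n}$, and then exhibit a concrete $k$-sparse vector that witnesses a violation of the RIP. Lemma~\ref{expansion} already hands us exactly such a witness.

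First, I would dispose of the degenerate case. If $B$ fails to be positive semi-definite, then by definition $C=0$, so $\|Cx\|^2=0$ for every $x$. For any nonzero $k$-sparse vector $x$ this contradicts the lower RIP inequality $(1-\delta)\|x\|^2 \leq \|Cx\|^2$ (recall $\delta \in {]0,1[}$, and in the regime of the proposition we may assume $\delta<1$, as otherwise the RIP statement is vacuous). Hence $C$ cannot satisfy the RIP of order $k$ with parameter $\delta$.

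Now assume $B$ is positive semi-definite, so $C^T C = B$. For any $k$-sparse vector $x$ we have
\[
\|Cx\|^2 \;=\; x^T C^T C x \;=\; x^T B x \;=\; \|x\|^2 + \frac{c}{\sqrt{n}} x^T A x.
\]
By Lemma~\ref{expansion}, since $G$ contains a $k$-subgraph with excess $\epsilon$, there exists a unit vector $x\in\rr^n$ supported on $k$ basis vectors such that $x^T A x \geq 2\epsilon(k-1)$. Plugging this into the displayed identity gives
\[
\|Cx\|^2 \;\geq\; 1 + \frac{2c\epsilon(k-1)}{\sqrt{n}} \;>\; 1+\delta,
\]
where the strict inequality is precisely the hypothesis $\delta<2c\epsilon(k-1)/\sqrt{n}$. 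Since $x$ is a $k$-sparse unit vector, this violates the upper RIP bound $\|Cx\|^2\leq (1+\delta)\|x\|^2$, so $C$ does not satisfy the RIP of order $k$ with parameter $\delta$.

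There is no real obstacle here: the content of the argument is essentially a direct translation of Lemma~\ref{expansion} through the Cholesky identity $C^T C = B$. The only point requiring a moment of care is to keep the two cases (whether or not $B\succeq 0$) separate, since the reduction defines $C$ differently in each case; but both cases yield the same conclusion.
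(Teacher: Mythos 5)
Your proof is correct and follows essentially the same route as the paper: handle the degenerate case $C=0$, then use the witness vector from Lemma~\ref{expansion} together with the Cholesky identity $C^TC=B$ to get $\|Cx\|^2 = 1 + c\,x^TAx/\sqrt{n} > 1+\delta$. You simply spell out a few intermediate steps (the failure of the lower RIP bound when $C=0$, and the explicit violation of the upper bound) that the paper leaves terse.
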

\begin{proof}
Let $A$ be the signed adjacency matrix of $G$ and $B = \Id_n + c A/\sqrt{n}$.
If $B$ is not semi-definite positive, $C(G)=0$ does not satisfy the RIP.
Otherwise $C^TC=B$. Let $x$ be the vector of Lemma~\ref{expansion}.
We have $||Cx||^2=x^TC^TCx=x^TBx=1+cx^TAx/\sqrt{n} > 1+\delta$.
\end{proof}
%
\subsection{Hardness from Hypothesis~\ref{hyp:one}}
%
\begin{lemma} \label{hard_lemma1}
 Set $k=n^{\alpha}$ and $\epsilon=1/n^{\beta}$ where
(as suggested in Section~\ref{subgraphs}) $\alpha, \beta$ are two constants
satisfying the constraints $\beta \in ]0,1/2[$ 
and $\alpha \in ]2\beta,\beta+1/2[$. 
Set also  $\delta=c'\epsilon k / \sqrt{n}$ where $c'>0$ is another constant.
Then for most graphs $G$ on $n$ vertices the matrix $C(G)$ satisfies
the RIP of order $k$ with parameter~$\delta$.
\end{lemma}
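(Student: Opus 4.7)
The plan is to deduce this directly from Proposition~\ref{modelC}, which already bounds the probability that the Cholesky image of a random graph satisfies the RIP. Observe that the signed adjacency matrix $A$ of $G(n,1/2)$ is exactly an instance of the random symmetric matrix from Model A, so $C(G)$ has the same law as the matrix $C$ in Section~\ref{sectionC}. Hence it suffices to check that the hypotheses and conclusion of Proposition~\ref{modelC} behave well asymptotically for the specific scalings $k = n^\alpha$, $\epsilon = 1/n^\beta$ and $\delta = c' \epsilon k/\sqrt{n} = c' n^{\alpha - \beta - 1/2}$.

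First I would fix a constant $c \in (0, 1/3)$ (any value works, e.g.\ $c = 1/4$), so that $3c < 1$ and the term $4\exp[-(1/c - 3)^2 n/32]$ in Proposition~\ref{modelC} decays exponentially in $n$. Next, to check the condition $3c < \delta \sqrt{n}/\sqrt{k}$, I would compute
\[
\frac{\delta \sqrt n}{\sqrt k} \;=\; \frac{c' \, n^{\alpha - \beta}}{n^{\alpha/2}} \;=\; c'\, n^{\alpha/2 - \beta}.
\]
Since the hypothesis $\alpha > 2\beta$ gives $\alpha/2 - \beta > 0$, this quantity tends to infinity, so the condition holds for all sufficiently large $n$. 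The condition $n \geq k \geq k_0$ is automatic because $\alpha < 1$ (actually $\alpha < \beta + 1/2 < 1$).

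The core step is to show that the remaining exponent in the failure probability is driven negative. One has $k \ln(ne/k) = O(n^\alpha \log n)$ and, again using $\alpha - \beta > \alpha/2$,
\[
\Bigl(\tfrac{\delta\sqrt n}{c} - 3\sqrt k\Bigr)^{\!2}\!/32 \;=\; \Theta\bigl(n^{2(\alpha-\beta)}\bigr).
\]
The assumption $\alpha > 2\beta$ is equivalent to $2(\alpha - \beta) > \alpha$, so the quadratic term dominates $k \ln(ne/k)$, and
\[
k \ln(ne/k) \;-\; \Bigl(\tfrac{\delta\sqrt n}{c} - 3\sqrt k\Bigr)^{\!2}\!/32 \;\longrightarrow\; -\infty
\]
as $n \to \infty$ (indeed at rate $-\Theta(n^{2(\alpha-\beta)})$). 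Combined with the exponential decay of the second failure term, the total probability that $C(G)$ fails the RIP of order $k$ with parameter $\delta$ tends to $0$, which gives the claim.

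The only real obstacle is the bookkeeping of exponents: we must keep track simultaneously of the constraints $\alpha > 2\beta$ (from Hypothesis~\ref{hyp:one}, ensuring the quadratic concentration term dominates the union bound) and $\alpha < \beta + 1/2$ (which guarantees $\delta = c' n^{\alpha-\beta-1/2} \in (0,1)$ for large $n$, so that the statement ``RIP with parameter $\delta$'' is meaningful). Both are given by assumption, so no additional work is required beyond the asymptotic comparison above.
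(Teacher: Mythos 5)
Your proof is correct and takes essentially the same approach as the paper's: you fix $c<1/3$, verify the hypothesis $3c<\min(1,\delta\sqrt{n}/\sqrt{k})$ via the computation $\delta\sqrt{n}/\sqrt{k}=c'n^{\alpha/2-\beta}\to\infty$, note that the second error term in Proposition~\ref{modelC} is exponentially small, and show the first error term's exponent is dominated by $-\Theta(n^{2(\alpha-\beta)})$ using $2(\alpha-\beta)>\alpha$. Your added remark that $\alpha<\beta+1/2$ keeps $\delta\in(0,1)$ is a small helpful observation not spelled out in the paper, but otherwise the two arguments coincide.
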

\begin{proof}
Fix a constant $c <1/3$.
We can apply Proposition~\ref{modelC} since the hypothesis 
$3c< \min (1,\delta\sqrt{n}/\sqrt{k})$ is satisfied for $n$ large enough
(note that 
$\delta \sqrt{n}/\sqrt{k} = c' \epsilon \sqrt{k} =c' n^{\alpha/2-\beta}$,
and the exponent $\alpha/2-\beta$ is positive).

In Proposition~\ref{modelC}, the probability that $C=C(G)$ does {\em not}
satisfy the RIP is bounded by a sum of two terms. The second one,
$4\exp[-(1/c -3)^2 n/32]$, clearly goes to 0 as $n \rightarrow + \infty$.
To obtain the same property for the first term we consider the argument:
\begin{equation} \label{argument}
k \ln (ne/k)-( \frac{\delta\sqrt{n}}{c} 
-3\sqrt{k})^2/32.
\end{equation}
Note that $\delta \sqrt{n}/c = (c'/c)\epsilon k = (c'/c)n^{\alpha-\beta}$, 
so that
this term dominates the term $3\sqrt{k}=3 n^{\alpha/2}$ 
since $\alpha-\beta > \alpha/2$.
Moreover $(\delta \sqrt{n}/c)^2=(c'/c)^2 n^{2(\alpha-\beta)}$ 
dominates the term
$k \ln(ne/k)$ since $k=n^{\alpha}$ and $2(\alpha-\beta) > \alpha$.
Therefore~(\ref{argument}) is equivalent to $-(c'/c)^2n^{2(\alpha-\beta)}/32$,
and its exponential goes to 0 as $n$ goes to  infinity.
Hence the same is true for the sum of the two error terms.
\end{proof}
\begin{theorem} \label{hard_th1}
Under Hypothesis~\ref{hyp:one}, no polynomial-time algorithm can 
approximate the RIP parameters of square matrices within any constant factor.

More precisely, fix any constants $\lambda > 1$ and $\delta_0 < 1/\lambda$.
Under Hypothesis~\ref{hyp:one}, no polynomial time algorithm 
can distinguish 
between matrices which satisfy the RIP of order $k$ with parameter 
$\delta \leq \delta_0$
and matrices which do not satisfy 
the RIP of order $k$ with parameter $\lambda \delta$.
\end{theorem}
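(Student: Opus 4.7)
The plan is to reduce dense subgraph certification to RIP approximation via the Cholesky reduction $C(\cdot)$ of the previous section. Fix the constants $\alpha, \beta$ as in Hypothesis~\ref{hyp:one} (so $\beta \in ]0,1/2[$ and $\alpha \in ]2\beta, \beta+1/2[$), and set $k=n^{\alpha}$, $\epsilon=1/n^{\beta}$. Choose a constant $c < 1/3$ (so that Proposition~\ref{modelC} and Lemma~\ref{hard_lemma1} are applicable), and a second constant $c'>0$ small enough that $\lambda c' < 2c$. Finally, set $\delta = c'\epsilon k / \sqrt{n} = c' n^{\alpha-\beta-1/2}$. Since $\alpha-\beta < 1/2$ we have $\delta \to 0$ as $n \to \infty$, so in particular $\delta \leq \delta_0$ for all sufficiently large $n$. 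Note also that $\lambda \delta = \lambda c' \epsilon k/\sqrt{n} < 2c\epsilon(k-1)/\sqrt{n}$ for large enough $k$, since $(k-1)/k \to 1$ and $\lambda c' < 2c$.

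Suppose, for contradiction, that there is a polynomial-time algorithm $\mathcal{A}$ that distinguishes the two cases in the theorem statement with the parameter $\delta$ just fixed. On input a graph $G$ on $n$ vertices, our certification procedure computes $C(G)$ (which takes polynomial time, since Cholesky decomposition is polynomial-time) and runs $\mathcal{A}$ on $C(G)$ with order $k$ and parameter $\delta$; it certifies $G$ iff $\mathcal{A}$ reports that $C(G)$ satisfies the RIP of order $k$ with parameter $\leq \delta$.

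Correctness has two parts. On the one hand, Lemma~\ref{hard_lemma1} (applied with the very $c$ and $c'$ above) gives that for most graphs $G$ on $n$ vertices, $C(G)$ satisfies the RIP of order $k$ with parameter $\delta$; on each such $G$ our procedure certifies, so the fraction of certified graphs tends to $1$. On the other hand, if $G$ does have a $k$-subgraph with excess $\epsilon$, then Proposition~\ref{excess} together with our inequality $\lambda \delta < 2c\epsilon(k-1)/\sqrt{n}$ implies that $C(G)$ does \emph{not} satisfy the RIP of order $k$ with parameter $\lambda \delta$; hence $\mathcal{A}$ must output the latter case, and we do not certify. Thus the procedure is a polynomial-time certification algorithm for the dense-subgraph problem of Hypothesis~\ref{hyp:one}, contradicting that hypothesis.

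The main obstacle is the parameter juggling, which already did most of the work in the previous lemma. Concretely, one has to choose a single pair $(c,c')$ that simultaneously (i) makes the two failure terms in Proposition~\ref{modelC} go to zero (handled by $c<1/3$ together with $\alpha-\beta < 1/2$ in Lemma~\ref{hard_lemma1}), (ii) forces $\delta \leq \delta_0$ (automatic for large $n$ because $\delta \to 0$; this is where the hypothesis $\delta_0 < 1/\lambda$ is harmlessly used, since any positive $\delta_0$ would suffice), and (iii) leaves a multiplicative gap of factor $\lambda$ between the "yes" and "no" RIP parameters, which is precisely what $\lambda c' < 2c$ ensures in light of Proposition~\ref{excess}.
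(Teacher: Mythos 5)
Your proof is correct and follows essentially the same route as the paper's: reduce dense-subgraph certification to RIP distinction via the Cholesky map $C(\cdot)$, using Lemma~\ref{hard_lemma1} for the ``most graphs satisfy RIP'' direction and Proposition~\ref{excess} for the dense-subgraph direction, with the coupling $\lambda c' < 2c$ to open the $\lambda$-gap. The only point you omit (and the paper states explicitly) is the one-line observation that the ``more precisely'' distinguishing statement implies the constant-factor approximation hardness in the first sentence, via $\lambda = \gamma^2$; this is a trivial addition.
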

In this theorem a distinguishing algorithm must accept all inputs
which satisfy the RIP of order $k$ with parameter 
$\delta \leq \delta_0$, and reject all inputs which do not satisfy
the RIP of order $k$ with parameter $\lambda \delta$. Its behavior on
other inputs is not specified.
\begin{proof}[of Theorem~\ref{hard_th1}]
The second part of the theorem clearly implies the first: if we have
an algorithm which approximates RIP parameters within some factor
$\gamma>1$ then we have a distinguishing algorithm with $\lambda = \gamma^2$.

Let us therefore assume that we have a distinguishing algorithm $\cal A$
with the associated constants $\lambda$ and $\delta_0$.
Fix a positive constant $c<1/3$ and two constants $\alpha$ and $\beta$ 
satisfying the constraints of Lemma~\ref{hard_lemma1},
for instance $\beta=1/3$ and $\alpha=3/4$.
Set as usual $k=n^{\alpha}$ and $\epsilon=1/n^{\beta}$.
We will use $\cal A$ to certify that most graphs on $n$ vertices do not have
any $k$-subgraph with excess $\epsilon$, thereby contradicting Hypothesis~\ref{hyp:one}.

On the one hand, by Lemma~\ref{hard_lemma1} for  most graphs $G$ 
the matrix $C(G)$ satisfy
the RIP of order $k$ with parameter $\delta=c'\epsilon k /\sqrt{n}$
(we will see quite soon that $c'$ can be any sufficiently small constant).
In this case $\delta = c'\epsilon k / \sqrt{n}=c'n^{\alpha-\beta-1/2}$
goes to 0 as $n \rightarrow +\infty$, so for $n$ large enough we have
$\delta \leq \delta_0$ and we can use the hypothesis on our distinguishing 
algorithm.

On the other hand, if $G$ has a $k$-subgraph with excess $\epsilon$ then 
by Proposition~\ref{excess} the RIP parameter of $C(G)$ is not smaller
than $2c\epsilon(k-1)/\sqrt{n}$. This is more than $\lambda \delta$
if $\lambda c' < 2c$ and $n$ is large enough. Therefore
given $\lambda$ and $c$ we choose a constant $c'<2c/\lambda$, and by running
$\cal A$ on $C(G)$ we can certify that most graphs on $n$ vertices do not have
any $k$-subgraph with excess $\epsilon$ (we certify $G$ if $\cal A$ accepts
$C(G)$).
\end{proof}
Note that the above hardness result has been established for {\em very small} 
RIP parameters: the matrices in the proof have RIP parameters of order
$\epsilon k / \sqrt{n}=n^{\alpha-\beta-1/2}$ 
and the exponent $\alpha-\beta-1/2$ is negative.
This may be viewed as a weakness of the result 
since applications to compressed sensing
only require a constant $\delta$ (as pointed out in the introduction,
any $\delta < \sqrt{2}-1$ will do).
We partially 
overcome this weakness in Section~\ref{hard2}, but for this we need 
to replace Hypothesis~\ref{hyp:one} by Hypothesis~\ref{hyp:two}.

Note also that the proof of Theorem~\ref{hard_th1} yields more information than contained in the statement of 
the theorem. For instance, with the settings $\alpha=3/4$ and $\beta=1/3$
we see that the gap problem in Theorem~\ref{hard_th1} remains hard 
even for $k=n^{3/4}$ and $\delta=c'/n^{1/12}$.
We have included this additional information in the table at the end of the
paper, for Theorem~\ref{hard_th1} as well as for our other hardness results.
%
\subsection{Hardness from Hypothesis~\ref{hyp:two}} \label{hard2}
%
\begin{lemma} \label{hard_lemma2}
 Set $k=\alpha n$ and $\epsilon=\const/\sqrt{n}$ where
$\alpha \in ]0,1[$ and $\const > 0$ are two constants.
Fix a constant $c < 1/3$ and another constant $c'$ such that 
$3c < c' \const \sqrt{\alpha}$  and 
\begin{equation} \label{c'}
\ln(e/\alpha)  < (\frac{c'\const\sqrt{\alpha}}{c}-3)^2/32.
\end{equation}
Finally, set $\delta=c'\epsilon k / \sqrt{n}=c'\alpha \const$.
Then for most graphs $G$ on $n$ vertices the matrix $C(G)$ satisfies
the RIP of order $k$ with parameter~$\delta$.
\end{lemma}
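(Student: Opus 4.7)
The plan is to apply Proposition~\ref{modelC} directly with the parameters $k = \alpha n$, $\delta = c' \alpha \const$, and the constant $c$ fixed in the statement, and then show that the resulting failure probability tends to $0$ as $n \to \infty$ thanks to the two hypotheses $c < 1/3$ and (\ref{c'}).

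First I would verify the hypothesis $3c < \min(1, \delta\sqrt{n}/\sqrt{k})$ of Proposition~\ref{modelC}. The bound $3c < 1$ is immediate from $c < 1/3$. For the second bound, a direct substitution gives
\[
\frac{\delta \sqrt{n}}{\sqrt{k}} = \frac{c' \alpha \const \sqrt{n}}{\sqrt{\alpha n}} = c' \const \sqrt{\alpha},
\]
so the assumption $3c < c' \const \sqrt{\alpha}$ supplies exactly what is needed. The inequality $n \geq k \geq k_0$ holds for all $n$ large enough since $\alpha \in {]0,1[}$ is fixed.

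Next I would substitute the parameter values into the two error terms provided by Proposition~\ref{modelC}. The second term $4\exp[-(1/c-3)^2 n/32]$ tends to $0$ as $n \to \infty$ because $c < 1/3$ makes $(1/c-3)^2$ a strictly positive constant. For the first term, using $k = \alpha n$ one gets $k \ln(ne/k) = \alpha n \ln(e/\alpha)$ and
\[
\Bigl(\frac{\delta \sqrt{n}}{c} - 3 \sqrt{k}\Bigr)^{\!2} \!\!\Big/ 32
= \frac{\alpha n}{32}\Bigl(\frac{c' \const \sqrt{\alpha}}{c} - 3\Bigr)^{\!2}\!,
\]
so the exponent in the first term factors as
\[
\alpha n \Bigl[\ln(e/\alpha) - \tfrac{1}{32}\bigl(c' \const \sqrt{\alpha}/c - 3\bigr)^2 \Bigr].
\]
By hypothesis (\ref{c'}) the bracket is a strictly negative constant, so the first term also tends to $0$ as $n \to \infty$. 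Summing the two vanishing probabilities and taking a complement finishes the argument.

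The proof is essentially a bookkeeping exercise around Proposition~\ref{modelC}, and the only slightly delicate point is making sure the two conditions on $c$ and $c'$ imposed in the statement match precisely what is needed to simultaneously (i) validate the hypothesis of Proposition~\ref{modelC} and (ii) force the dominant exponent $k \ln(ne/k) - (\delta \sqrt{n}/c - 3\sqrt{k})^2/32$ to be negative and linear in $n$. No obstacle beyond careful algebraic matching is expected, since unlike Lemma~\ref{hard_lemma1} there are no subtle asymptotic exponents to balance here: every term scales linearly in $n$, which is precisely why hypothesis (\ref{c'}) takes the clean constant form given.
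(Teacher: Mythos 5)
Your proof is correct and follows essentially the same route as the paper: apply Proposition~\ref{modelC} directly, check $3c < \min(1,\delta\sqrt{n}/\sqrt{k})$ via the identity $\delta\sqrt{n}/\sqrt{k}=c'\const\sqrt{\alpha}$, note the second error term vanishes since $c<1/3$, and show the first error term's exponent equals $\alpha n\bigl[\ln(e/\alpha)-\tfrac{1}{32}(c'\const\sqrt{\alpha}/c-3)^2\bigr]$, which is $-Kn$ for a positive constant $K$ by~(\ref{c'}).
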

\begin{proof}
We can apply Proposition~\ref{modelC} since the hypothesis 
$3c< \min (1,\delta\sqrt{n}/\sqrt{k})$ is satisfied:
$\delta \sqrt{n}/\sqrt{k} = \delta / \sqrt{\alpha} = c' \const \sqrt{\alpha} > 3c$
by choice of $c'$.

As in the proof of Lemma~\ref{hard_lemma1}, the second error term 
in Proposition~\ref{modelC} clearly goes to 0 as $n \rightarrow + \infty$.
We therefore turn our attention to the first error term,
and in particular to the argument~(\ref{argument})
 of the exponential function.

In~(\ref{argument}) the term $\delta \sqrt{n}/c-3\sqrt{k}$ is equal to
$(c'\alpha \const /c - 3 \sqrt{\alpha})\sqrt{n}$ 
so the term $(\frac{\delta\sqrt{n}}{c} -3\sqrt{k})^2/32$ 
is equal to 
$\alpha(c'\sqrt{\alpha} A /c - 3 )^2n/32.$

The positive term in~(\ref{argument}) is $k \ln (ne/k)=\alpha n \ln(e/\alpha)$. Overall, $(\ref{argument})$ is equal to $-Kn$ for some constant $K$,
and $K$ is positive by~(\ref{c'}).
\end{proof}
\begin{theorem} \label{hard_th2}
Fix any constant $\delta_0 < \kappa/2$, where $\kappa$ is the constant
from Hypothesis~\ref{hyp:two}.
Assuming this hypothesis, no polynomial time algorithm 
can distinguish 
between matrices which satisfy the RIP of order $k$ with parameter $\delta_0$
and matrices which do not satisfy 
the RIP of order $k$ with parameter $\kappa/2$.
\end{theorem}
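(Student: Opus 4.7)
The plan is to adapt the proof of Theorem~\ref{hard_th1} by substituting Lemma~\ref{hard_lemma2} for Lemma~\ref{hard_lemma1}. Assume for contradiction that a polynomial-time distinguishing algorithm $\mathcal{A}$ exists. I will use $\mathcal{A}$ together with the Cholesky reduction $G \mapsto C(G)$ to certify, in polynomial time, that most graphs on $n$ vertices contain no $k(n)$-subgraph with excess $\epsilon(n)$, for a pair of parameter functions $k(n) = \alpha n$ and $\epsilon(n) = A/\sqrt{n}$ satisfying the Hypothesis~\ref{hyp:two} constraints $\alpha \leq \kappa$ and $\alpha A \leq \kappa$. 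This yields the desired contradiction.

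The heart of the argument is to choose constants $c, c', \alpha, A$ so that simultaneously: (i) the hypotheses of Lemma~\ref{hard_lemma2} hold, so that $C(G)$ satisfies the RIP of order $k$ with parameter $\delta = c' \alpha A$ for most $G$; (ii) $\delta \leq \delta_0$, forcing $\mathcal{A}$ to accept $C(G)$; (iii) $\alpha \leq \kappa$ and $\alpha A \leq \kappa$; and (iv) $2c \alpha A > \kappa/2$, so that by Proposition~\ref{excess}, whenever $G$ has a $k$-subgraph with excess $\epsilon$, the matrix $C(G)$ fails the RIP of order $k$ with parameter $\kappa/2$ (for $n$ large enough), forcing $\mathcal{A}$ to reject. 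Combining (ii), (iii) and (iv) yields $2c \alpha A > \kappa/2 \geq 2 c' \alpha A$, so the ratio $c/c'$ must exceed $\kappa/(4\delta_0)$; together with the upper bound $c < 1/3$ from Proposition~\ref{modelC} this pins $c$ to a narrow range that nevertheless contains $1/4$ strictly. Concretely, I would pick some $c \in (1/4, 1/3)$, set $\alpha A = \kappa$ (the maximum allowed, ensuring $2c \alpha A = 2c\kappa > \kappa/2$), pick $c' \leq \delta_0/\kappa$ so that $c'\alpha A \leq \delta_0$, and finally choose $\alpha > 0$ small enough to meet the remaining two constraints of Lemma~\ref{hard_lemma2}: the lower bound $c' > 3c\sqrt{\alpha}/\kappa$, which forces $\sqrt{\alpha} < \delta_0/(3c)$ in order for the interval for $c'$ to be nonempty, and the logarithmic inequality~(\ref{c'}), which is automatic for small $\alpha$ since its right-hand side grows like $1/\alpha$ while $\ln(e/\alpha)$ grows only like $\ln(1/\alpha)$.

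With the constants chosen, the conclusion follows exactly as in the proof of Theorem~\ref{hard_th1}: for most $G$, Lemma~\ref{hard_lemma2} tells us $C(G)$ has RIP parameter $\leq \delta_0$, so $\mathcal{A}$ accepts $C(G)$; for any $G$ containing a $k$-subgraph with excess $\epsilon$, Proposition~\ref{excess} tells us the RIP parameter exceeds $2c\alpha A - o(1) > \kappa/2$ for $n$ large, so $\mathcal{A}$ rejects $C(G)$. Hence the composition $G \mapsto \mathcal{A}(C(G))$ is a polynomial-time certification algorithm, contradicting Hypothesis~\ref{hyp:two}. The main obstacle is the tight balancing of constants forced by the hypothesis $\delta_0 < \kappa/2$: the multiplicative gap $\kappa/(2\delta_0)$ between the RIP-accept and RIP-reject thresholds must be absorbed by the ratio $2c/c'$, and since Proposition~\ref{modelC} caps $c$ at $1/3$ the available slack is small, although nonempty for every $\delta_0 < \kappa/2$.
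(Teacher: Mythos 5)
Your proposal is correct and follows essentially the same route as the paper's proof: fix $c\in(1/4,1/3)$ so that Proposition~\ref{excess} yields failure of the $\kappa/2$-RIP when a dense $k$-subgraph is present, set $A=\kappa/\alpha$, take $c'$ small enough that $\delta=c'\kappa\le\delta_0$, and then shrink $\alpha$ until the two remaining hypotheses of Lemma~\ref{hard_lemma2} become vacuous. One small slip: in the chain ``$2c\alpha A>\kappa/2\ge 2c'\alpha A$'' the second inequality is not what the constraints give (they give $2c'\alpha A\le 2\delta_0<\kappa$, not $\le\kappa/2$); however, the conclusion $c/c'>\kappa/(4\delta_0)$ that you draw from it is correct when derived directly from $2c\alpha A>\kappa/2$ and $c'\alpha A\le\delta_0$, and the rest of the parameter balancing goes through unchanged.
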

\begin{proof}
Let us assume the contrary, and let $\cal A$ be the distinguishing algorithm.
 Set $k=\alpha n$ and $\epsilon=\const/\sqrt{n}$ where $\const = \kappa/\alpha$ and
the constant 
$\alpha \in ]0,\kappa[$ is small enough (we will see at the end of the proof
how small is small enough).
We will use $\cal A$ to certify that most graphs on $n$ vertices do not have
any $k$-subgraph with excess $\epsilon$, thereby contradicting Hypothesis~\ref{hyp:two}.

Consider first the case of a graph $G$ having a $k$-subgraph 
with excess~$\epsilon$. 
Let us fix a constant $c<1/3$ such that $2c \kappa > \kappa/2$.
Then by Proposition~\ref{excess} the matrix $C(G)$ does not satisfy
the RIP of order $k$ with parameter $\kappa/2$.

On the other hand, if we can find a constant $c'$ which satisfies the hypotheses of Lemma~\ref{hard_lemma2} and such that 
$\delta=c'\alpha \const = \kappa c' < \delta_0$ then $C(G)$ will satisfy the RIP
of order $k$ with parameter $\delta_0$ for most graph $G$.
We will therefore be able to use $\cal A$ to certify that  most graphs on $n$ vertices do not have
any $k$-subgraph with excess~$\epsilon$.

It just remains to explain how to choose $c'$. We will simply pick any $c'$
such that $\kappa c' < \delta_0$ and will then check that the hypotheses of 
Lemma~\ref{hard_lemma2} are satisfied if $\alpha$ is small enough.
First we have to check the condition $3c < c'\const \sqrt{\alpha}$.
The right hand side is equal to $\kappa c'/\sqrt{\alpha}$ and will exceed
the left-hand side if e.g. $\sqrt{\alpha} < c' \kappa$ (recall that $3c<1$).
The other condition to be checked is~(\ref{c'}).
For the same reason, it will be satisfied for small enough~$\alpha$:
the right-hand side is equivalent to $(\kappa c'/c)^2/(32\alpha)$,
which dominates the logarithmic function on the left-hand side.
\end{proof}
%
%
\section{Hardness for Rectangular Matrices and Randomized Certification}
\label{hard_rect}
%
In this section we show that the RIP parameters of rectangular
matrices are hard to approximate. This is the case of interest
in compressed sensing. 
In a sense this was already done in Section~\ref{hard_square}:
we have shown that the special case of square matrices is already hard.
Nevertheless, it is of interest to know that the problems remains hard
for {\em strictly rectangular} matrices. This is what we do in this 
section. Proofs are essentially by reduction from the square case.
We begin with a simple lemma.
\begin{lemma} \label{diag_rip}
Consider a matrix $\Phi$ with the following block structure:
$$\Phi=\left(
\begin{array}{cc}
A & 0\\  0 & B
\end{array}\right).$$
This matrix satisfies the RIP of order $k$ with parameter $\delta$ 
if and only if the same is true for both $A$ and $B$.
\end{lemma}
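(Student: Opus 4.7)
The plan is to prove both directions separately, exploiting the block-diagonal structure to decouple the action of $\Phi$ on the two coordinate blocks.

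For the ``only if'' direction, I would start by noting that any $k$-sparse vector $y$ for $A$ (respectively $B$) can be embedded into a $k$-sparse vector $x$ for $\Phi$ by padding with zeros in the other block. Since $\Phi x = (Ay,\, 0)$ or $(0,\, By)$ depending on the block, and since $\|x\|^2 = \|y\|^2$, the RIP inequality for $\Phi$ immediately yields the RIP inequality for whichever of $A,B$ we are considering. This takes care of one direction with essentially no work.

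For the ``if'' direction, I would take an arbitrary $k$-sparse vector $x \in \cc^{N_A + N_B}$ and split it as $x = (x_A, x_B)$ according to the block decomposition of $\Phi$. The key observation is that if $x$ has at most $k$ nonzero entries in total, then in particular each of $x_A$ and $x_B$ is $k$-sparse (the support of each is a subset of the support of $x$). By the block structure,
\[ \|\Phi x\|^2 = \|A x_A\|^2 + \|B x_B\|^2, \qquad \|x\|^2 = \|x_A\|^2 + \|x_B\|^2. \]
Applying the RIP of $A$ to $x_A$ and the RIP of $B$ to $x_B$ gives
\[ (1-\delta)(\|x_A\|^2 + \|x_B\|^2) \leq \|A x_A\|^2 + \|B x_B\|^2 \leq (1+\delta)(\|x_A\|^2 + \|x_B\|^2), \]
which is exactly the RIP inequality for $\Phi$ applied to $x$.

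There is no real obstacle here; the only subtle point worth spelling out is the sparsity bookkeeping, namely that the total $k$-sparsity of $x$ implies each of $x_A, x_B$ is \emph{at most} $k$-sparse (so we only need the RIP of order $k$ for $A$ and $B$, not a larger order). The proof is therefore short and purely formal.
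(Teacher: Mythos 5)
Your proof is correct and follows essentially the same route as the paper's: one direction by padding a $k$-sparse vector with zeros, the other by splitting $x$ into its two block components, noting each is still $k$-sparse, and summing the two RIP inequalities using the block-orthogonal decomposition of $\|\Phi x\|^2$ and $\|x\|^2$. There is nothing to add or fix.
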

\begin{proof}
For an input vector $x$ with the corresponding block structure $x=(u\ v)$
we have $||x||^2=||u^2||+||v||^2$ and $||\Phi x||^2 = ||Au||^2 + ||Bv||^2.$
Therefore, if $\Phi$ satisfies the RIP of order $k$ with parameter $\delta$ 
then the same is true for $A$ (take $v=0$ and $u$ $k$-sparse).
The same argument applies also to $B$.

Conversely, assume that $A$ and $B$ satisfy the RIP of order $k$
 with parameter $\delta$. Let $x=(u\ v)$ be a $k$-sparse vector.
We have 
$||\Phi x||^2 - ||x||^2 = (||\Phi u||^2 - ||u||^2)+(||\Phi v||^2 - ||v||^2).$
Both $u$ and $v$ must be $k$-sparse,
so the first term is bounded in absolute value by $\delta ||u||^2$ and the
second one by $\delta ||v||^2$. 
The result follows since $||u||^2+||v||^2 = ||x||^2$.
\end{proof}
\begin{theorem} \label{hard_rec1}
Fix any constants $\lambda > 1$ and $\delta_0 < 1/\lambda$.
Under Hypothesis~\ref{hyp:one}, no polynomial time algorithm 
can distinguish 
between matrices which satisfy the RIP of order $k$ with parameter 
$\delta \leq \delta_0$
and matrices which do not satisfy 
the RIP of order $k$ with parameter $\lambda \delta$.

Moreover, polynomial-time distinction between these two cases remains
impossible even for matrices of size $2n \times (n+N)$ where
 $N=n^{1+\epsilon_0}$. Here $\epsilon_0 \in ]0,1/2[$ is an absolute constant
(independent of $\lambda$ and $\delta_0$).
\end{theorem}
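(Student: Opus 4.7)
The first assertion is literally Theorem~\ref{hard_th1}, so the plan concerns only the size extension. The idea is to boost any $n\times n$ square instance $C$ produced by the Cholesky reduction to a $2n\times(n+N)$ rectangular instance by padding with a block $D$ whose RIP parameters are known to be excellent, so that the RIP behavior of the padded matrix is entirely controlled by $C$. Concretely I would form
\[
\Phi \;=\; \begin{pmatrix} C & 0 \\ 0 & D \end{pmatrix}
\]
where $D$ is a fixed $n\times N$ matrix, chosen once and for all depending on $n$, that satisfies the RIP of order $k$ with some parameter $\delta_D$ strictly less than $\delta_0$. Lemma~\ref{diag_rip} then immediately transports the two gap cases from $C$ to $\Phi$: if $C$ satisfies the RIP of order $k$ with some $\delta_C\leq\delta_0$, then $\Phi$ does too with parameter $\max(\delta_C,\delta_D)\leq\delta_0$; conversely, if $C$ fails the RIP of order $k$ with parameter $\lambda\delta_0$, then so does $\Phi$, because $C$ is a diagonal block and its witness $k$-sparse vector (extended by zeros in the $D$-block) is still a $k$-sparse witness for $\Phi$.

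Granted such a $D$, a polynomial-time distinguisher for the rectangular gap problem run on $\Phi$ would solve the square gap problem of Theorem~\ref{hard_th1}, contradicting Hypothesis~\ref{hyp:one}. So the whole content of the proof is the construction of $D$ in polynomial time with the required aspect ratio and RIP guarantees. Here one must remember that in the proof of Theorem~\ref{hard_th1} we are free to choose the exponents $\alpha,\beta$ anywhere in the intervals $\beta\in]0,1/2[$ and $\alpha\in]2\beta,\beta+1/2[$; in particular we may take $\alpha$ arbitrarily close to $1/2$ from above. Thus the hardness survives even when $k=n^\alpha$ sits only slightly above $\sqrt{n}$, and in that regime we can invoke either the explicit construction of Bourgain et al.~\cite{BDFKK,BDFKKb}, which produces in polynomial time a deterministic $n\times N$ matrix with $N=n^{1+\epsilon_0}$ satisfying the RIP of order $n^{1/2+\epsilon_0}$, or the probabilistic construction of Theorem~\ref{randrip} applied with $m\gg k$. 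Both supply an $n\times N$ matrix $D$ whose RIP parameter at order $k$ is below $\delta_0$.

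The main obstacle in the plan is exactly this parameter bookkeeping: we need the same value of $\epsilon_0$ to work simultaneously as the Bourgain exponent, as the aspect-ratio exponent in $N=n^{1+\epsilon_0}$, and in the range constraints of Theorem~\ref{hard_th1}. The resolution is to fix $\epsilon_0$ once (small, universal, coming from \cite{BDFKKb}) and then pick $\alpha=1/2+\epsilon_0/3$ and $\beta$ in the non-empty interval $]\epsilon_0/3,\,1/4+\epsilon_0/6[$, so that the triple $(\alpha,\beta,\epsilon_0)$ meets all three constraints at once, and the resulting $\delta_0$ threshold is unchanged because $\delta_0$ in Theorem~\ref{hard_th1} may be any constant in $]0,1/\lambda[$. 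A minor subsidiary point: if Bourgain's construction yields a fixed RIP constant exceeding $\delta_0$, then we would replace it by the probabilistic construction (Theorem~\ref{randrip}), which achieves arbitrarily small constant parameter at the cost of making the reduction randomized; either variant suffices to contradict Hypothesis~\ref{hyp:one}, since that hypothesis rules out even randomized certification up to the spectral-algorithm range.
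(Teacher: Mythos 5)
Your overall plan is exactly the paper's: build $C'(G)=\mathrm{diag}(C(G),B_n)$ with $B_n$ the deterministic Bourgain matrix of size $n\times n^{1+\epsilon_0}$, invoke Lemma~\ref{diag_rip} to transport the gap from $C(G)$ to $C'(G)$, and then shift the exponents $\alpha,\beta$ so that $k$ fits inside the Bourgain order $n^{1/2+\epsilon_0}$ (the paper in fact takes $\alpha=\tfrac12+\epsilon_0$ — its ``$\alpha=\epsilon_0$'' is evidently a misprint, since the stated interval $]\epsilon_0,\tfrac12(\tfrac12+\epsilon_0)[$ for $\beta$ is exactly what $\alpha=\tfrac12+\epsilon_0$ yields in Lemma~\ref{hard_lemma1}).

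Where your write-up has a genuine gap is in the parameter bookkeeping, and it is not the point you flag in your last paragraph. For the reduction to go through you need the combined RIP parameter of $C'(G)$ on the ``yes'' side to still be $\delta_C=c'n^{\alpha-\beta-1/2}$, because that is the quantity for which the proof of Theorem~\ref{hard_th1} establishes the matching lower bound $\lambda\delta_C<2c\epsilon(k-1)/\sqrt{n}$ on the ``no'' side. By Lemma~\ref{diag_rip} the combined parameter is $\max(\delta_C,\delta_D)$ where $\delta_D=n^{-\epsilon_0}$ is the Bourgain parameter, so you must ensure $\delta_D\le\delta_C$ asymptotically — equivalently, with $\alpha=\tfrac12+\epsilon_0/3$, that $\beta\le 4\epsilon_0/3$. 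Your stated admissible interval $]\epsilon_0/3,\,\tfrac14+\epsilon_0/6[$ comes only from the constraints of Lemma~\ref{hard_lemma1} and for small $\epsilon_0$ admits $\beta$ far larger than $4\epsilon_0/3$; for such $\beta$ one gets $\delta_D\gg\delta_C$, the accept parameter is forced up to $\delta_D$, and the no-instances are no longer guaranteed to fail RIP at level $\lambda\delta_D$. The paper makes this explicit by imposing the extra constraint $\beta<2\epsilon_0$ (its analogue of $\beta<4\epsilon_0/3$ for its choice of $\alpha$). The fix in your version is simply to intersect your interval with $\beta<4\epsilon_0/3$; this intersection is non-empty for small $\epsilon_0$, so the approach survives, but as written the constraint is missing.

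A secondary point: your ``minor subsidiary point'' misdiagnoses the potential issue with $B_n$. The Bourgain construction gives parameter $n^{-\epsilon_0}$, which tends to $0$, so there is never a ``fixed RIP constant exceeding $\delta_0$''; the real question is the polynomial rate, as above. Replacing $B_n$ by the random construction of Theorem~\ref{randrip} would also not be a free move under Hypothesis~\ref{hyp:one}: that hypothesis, unlike the explicitly ``randomized'' Hypothesis~\ref{hyp:two:random}, is used here in the deterministic sense, and making the reduction randomized would change what is being contradicted.
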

The first part of this theorem is just Theorem~\ref{hard_th1}.
The second part is new. Its proof is a refinement of the proof of 
Theorem~\ref{hard_th1} (and refers to it).
\begin{proof}[of Theorem~\ref{hard_rec1}]
From a graph $G$ on $n$ vertices we construct the matrix
$$C'(G)=\left(
\begin{array}{cc}
C(G) & 0\\  0 & B_n
\end{array}\right)$$
where $C(G)$ is as in the previous section and $B_n$ is a matrix with
good restricted isometry properties. 
Its role is to ensure the rectangular format that we need for $C'(G)$.
Our specific choice for $B_n$ is the matrix constructed in~\cite{BDFKK,BDFKKb}.
It is of size $n \times N$ where $N=n^{1+\epsilon_0}$, and it satisfies the
RIP of order $n^{{\frac1{2}}+\epsilon_0}$ with parameter $n^{-\epsilon_0}$.
Moreover, $B_n$ can be constructed deterministically in time polynomial in $n$.
Note that $C'(G)$ is of size $2n \times (n+N)$ as required in the statement
of Theorem~\ref{hard_rec1}.

Let us assume that we have a distinguishing algorithm $\cal A$ which
works for matrices of size $2n \times (n+N)$.
We will certify that a graph $G$ on $n$ vertices does
 not have any $k$-subgraph with excess $\epsilon$ if $\cal A$ accepts
$C'(G)$. 
As in the proof of Theorem~\ref{hard_th1}, for suitable
choices of $k$, $\epsilon$ and $\delta$ this will yield a contradiction 
with Hypothesis~\ref{hyp:one}. We now fill in the remaining details.

Instead of setting $\alpha=3/4$ and $\beta=1/3$
we now set $\alpha=\epsilon_0$; in order to satisfy the constraints
of Lemma~\ref{hard_lemma1} we need to take $\beta$ in the interval
$]\epsilon_0,\frac1{2}(\frac1{2}+\epsilon_0)[$.
For reasons that will soon become clear  we pick a $\beta$ which also satisfies
the additional constraint $\beta < 2 \epsilon_0$, and we set as usual
$k=n^{\alpha}$ and $\epsilon=1/n^{\beta}$.
It remains to show that we can use $\cal A$ to certify that most graphs
on $n$ vertices do not have any $k$-subgraph with excess $\epsilon$.

On the one hand, as explained in the proof of Theorem~\ref{hard_th1}
for most graphs $C(G)$ satisfies the RIP of order $k$ with parameter
$\delta=c' n^{\alpha-\beta-1/2}$ 
(the constant $c'$ is chosen as in that proof).
Remember also that $B_n$ satisfies the RIP of order $k$ with parameter
$n^{-\epsilon_0}$. Due to the choice $\beta < 2\epsilon_0$ we have
$n^{-\epsilon_0} < \delta$ for $n$ large enough, so by Lemma~\ref{diag_rip} 
for most graphs $G$ the matrix 
$C'(G)$ will satisfy the RIP of order $k$ with parameter $\delta$.

On the other hand, we have seen in the proof of Theorem~\ref{hard_th1}
that if $n$ is large enough and 
$G$ has a $k$-subgraph with excess $\epsilon$ then its RIP parameter
is larger than $\lambda \delta$. By Lemma~\ref{diag_rip} the same is true
of $C'(G)$.
\end{proof}
We can also give a hardness result for rectangular matrices based on
Hypothesis~\ref{hyp:two}. In fact we need a randomized version of this hypothesis 
which we state explicitly below.
\begin{hypothesis}[Randomized version of Hypothesis 2]\label{hyp:two:random}
There is a constant $\kappa \in ]0,1[$ such that the following holds.

Fix two arbitrary constants $\alpha \in ]0,1[$ and $\const >0$ with 
$\alpha \leq \kappa$ and 
$\alpha \const \leq \kappa$. 
Set $\epsilon(n)=\const/\sqrt{n}$ and $k(n)=\alpha n$ 
(again we only consider values of $n$ such that $k(n) \in \nn$).
Then no polynomial time {\em randomized} algorithm can certify that
most graphs on $n$ vertices do not have any $k(n)$-subgraph 
with excess $\epsilon(n)$.
\end{hypothesis}
In the above hypothesis, the hypothetical randomized algorithm would have
to satisfy the following properties:
\begin{itemize}
\item[(i)] For most graphs $G$ on $n$ vertices, 
with probability at least (say) 3/4 the algorithm certifies that $G$ 
does not contain any $k(n)$-subgraph with excess $\epsilon(n)$.
\item[(ii)] For all graphs $G$ on $n$ vertices, if $G$ is certified
then it is always true (with probability 1) that $G$ does not contain
any $k(n)$-subgraph with excess $\epsilon(n)$.
\end{itemize}
Note that the probability bounds in (i) and (ii) refer to the {\em internal}
 coin tosses of the algorithm.
\begin{theorem} \label{hard_rec2}
Fix any constant $\delta_0 < \kappa/2$, where $\kappa$ is the constant
from Hypothesis~\ref{hyp:two:random}.
Assuming this hypothesis, no polynomial time algorithm 
can distinguish 
between matrices which satisfy the RIP of order $k$ with parameter $\delta_0$
and matrices which do not satisfy 
the RIP of order $k$ with parameter $\kappa/2$.

Moreover, polynomial-time distinction between these two cases remains
impossible even for matrices of size $2n \times 100n$.
\end{theorem}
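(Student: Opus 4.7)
My plan is to extend the Cholesky reduction of Theorem~\ref{hard_th2} by padding $C(G)$ with a rectangular block of good RIP matrix $B_n$, exactly in the spirit of the proof of Theorem~\ref{hard_rec1}, but with two essential modifications tailored to the linear-order regime of Hypothesis~\ref{hyp:two:random}. First, because here $k(n)=\alpha n$ is linear in $n$, the deterministic Bourgain et al. construction used in Theorem~\ref{hard_rec1} is not available (it only reaches order $n^{1/2+\epsilon_0}$). I would instead take $B_n$ to be a random normalized Bernoulli matrix of size $n \times 99n$, as in Theorem~\ref{randrip}. Second, because $B_n$ is random, the reduction becomes a randomized one, which is precisely why Hypothesis~\ref{hyp:two:random} is formulated for randomized certification algorithms.

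More concretely, I would assume towards a contradiction that a polynomial-time distinguishing algorithm $\mathcal{A}$ exists on $2n \times 100n$ matrices. Pick $\alpha \in ]0,\kappa[$ small enough that (i) the argument of Theorem~\ref{hard_th2} goes through with $A=\kappa/\alpha$, and simultaneously (ii) the parameters $m = k = \alpha n$, $\epsilon = \delta_0$, $N=99n$ satisfy $n \geq C \delta_0^{-2} \alpha n \log(99e/\alpha)$ from Theorem~\ref{randrip}. Condition (ii) reduces to $C \delta_0^{-2} \alpha \log(99e/\alpha) \leq 1$, which holds for all sufficiently small $\alpha$. Then draw $B_n$ at random and form
\[
C'(G)=\begin{pmatrix} C(G) & 0 \\ 0 & B_n \end{pmatrix},
\]
which has the required $2n \times 100n$ format. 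Certify $G$ as free of $k$-subgraphs of excess $\epsilon = A/\sqrt{n}$ if and only if $\mathcal{A}$ accepts $C'(G)$.

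The correctness analysis then splits according to Lemma~\ref{diag_rip}. In the positive direction, for most graphs $G$ the matrix $C(G)$ satisfies the RIP of order $k$ with parameter $\delta_0$ by Lemma~\ref{hard_lemma2} (with $c'$ chosen so that $\kappa c' < \delta_0$, exactly as in Theorem~\ref{hard_th2}); independently, by Theorem~\ref{randrip}, the random $B_n$ satisfies the same RIP with probability $1 - 2\exp(-c\delta_0^2 n)$. So with probability at least, say, $3/4$ over the internal coin tosses, Lemma~\ref{diag_rip} gives that $C'(G)$ satisfies the RIP of order $k$ with parameter $\delta_0$, and $\mathcal{A}$ must accept it. In the negative direction, if $G$ contains a $k$-subgraph of excess $\epsilon$, then by the choice $2c\kappa > \kappa/2$ and Proposition~\ref{excess}, $C(G)$ fails the RIP of order $k$ with parameter $\kappa/2$, and by Lemma~\ref{diag_rip} so does $C'(G)$, regardless of the realization of $B_n$. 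Hence $\mathcal{A}$ always rejects such $G$. This yields a polynomial-time randomized algorithm satisfying conditions (i) and (ii) following Hypothesis~\ref{hyp:two:random}, contradicting it.

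The main obstacle I anticipate is verifying that a single small-enough $\alpha$ can simultaneously satisfy all constraints inherited from Theorem~\ref{hard_th2} (namely $\sqrt{\alpha} < c'\kappa$ and the logarithmic inequality~(\ref{c'})) together with the new Bernoulli-matrix constraint $C \delta_0^{-2} \alpha \log(99e/\alpha) \leq 1$. All three are of the form ``hold for every sufficiently small $\alpha$'' (the new one because $\alpha \log(1/\alpha) \to 0$), so they are jointly satisfiable, but writing this out cleanly and tracking the asymmetry between ``most graphs'' (over the graph distribution) and ``with high probability'' (over the coin tosses of the reduction) requires a bit of care; in particular the error probability $2\exp(-c\delta_0^2 n)$ from Theorem~\ref{randrip} is independent of $G$, so the two ``most'' statements combine by a union bound to show that for most $G$, the reduction accepts $G$ with probability $\geq 3/4$, meeting condition (i) of the randomized hypothesis.
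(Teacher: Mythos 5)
Your proposal is correct and matches the paper's proof essentially step for step: the same block-diagonal construction $C'(G)$, the same replacement of the deterministic Bourgain et al.\ matrix by a random $n\times 99n$ Bernoulli matrix (motivating the randomized form of Hypothesis~\ref{hyp:two:random}), the same use of Lemma~\ref{diag_rip} to split the two directions, and the same verification that the constraint $C\delta_0^{-2}\alpha\log(99e/\alpha)\le 1$ holds for sufficiently small $\alpha$ jointly with the constraints inherited from Theorem~\ref{hard_th2}. Your closing remark about combining ``most graphs'' with ``most coin tosses'' via a union bound is exactly the (implicit) bookkeeping the paper relies on, and you correctly write $\kappa/2$ where the paper has a minor slip writing ``$1/2$'' in the negative direction.
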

Again, only the second part of the theorem is new. 
\begin{proof}
As in the proof of Theorem~\ref{hard_rec1} we construct from a graph $G$
a matrix of the form
$$C'(G)=\left(
\begin{array}{cc}
C(G) & 0\\  0 & B_n
\end{array}\right).$$
For $B_n$, instead of of the deterministic construction from~\cite{BDFKK,BDFKKb} we will use a $n \times 99n$ random matrix given by Theorem~\ref{randrip}.
As before, we will certify that $G$  does
 not have any $k$-subgraph with excess $\epsilon$ if the hypothetical 
distinguishing algorithm accepts $C'(G)$. This will yield a contradiction
with the randomized version of Hypothesis~\ref{hyp:two}.

In the proof of Theorem~\ref{hard_th2} we showed how to set the various 
parameters so that $C(G)$ does not satisfy the RIP of order $k$ 
with parameter 1/2 in the case where $G$ has a $k$-subgraph with 
excess $\epsilon$. In this case,
by Lemma~\ref{diag_rip}
 it is also true that $C'(G)$ does not satisfy the RIP of order $k$ 
with parameter 1/2. Therefore
we will never certify a graph which contains a $k$-subgraph with 
excess $\epsilon$ (note that this holds irrespective of the choice of $B_n$).

We now turn our attention to the other case. By Theorem~\ref{randrip},
with probability approaching 1 as $n \rightarrow +\infty$ matrix
$B_n$ will satisfy the RIP of order $k$ with parameter $\delta_0$
if the condition
\begin{equation} \label{condition_hardrec2}
n \geq C \delta_0^{-2} k \log (eN/k).
\end{equation}
is satisfied. Since $k=\alpha n$ and $N=99n$ this is equivalent to
$\delta_0^2 \geq C \alpha \log(99e/\alpha).$
Note that the right-hand side of this expression goes to 0 with $\alpha$,
but the left-hand side is constant. Condition~\ref{condition_hardrec2}
will therefore be satisfied for any small enough $\alpha$.
But recall from the proof of Theorem~\ref{hard_th2} that for small enough
$\alpha$ we can set the other parameters so that $C(G)$ satisfies
the RIP of order $k$ with parameter $\delta_0$ for most graphs $G$.
With such a setting of parameters, we conclude from Lemma~\ref{diag_rip}
that $C'(G)$ satisfies the RIP of order $k$ with parameter $\delta_0$
for most graphs and most choices of the random matrix $B_n$.
\end{proof}
The constant 100 in Theorem~\ref{hard_rec2}
 can be replaced by any other constant.
Note also that the hypothetical polynomial-time algorithm in this theorem
remains deterministic: 
it is only the (hypothetical) algorithm for graph certification which
is randomized. 
It is clear, however, that Theorem~\ref{hard_rec2} remains
true for randomized algorithms with one-sided error.

Note that depending on the value of the constant $\kappa$ in Hypothesis~\ref{hyp:two:random}, 
Theorem~\ref{hard_rec2} does not rule necessarily rule out the existence
of a polynomial time algorithm for deciding whether a RIP parameter
is smaller than $\sqrt{2}-1$ (recall from the introduction that 
this is good enough for applications to compressed sensing).
Such an algorithm is ruled out if $\sqrt{2}-1 \in ]\kappa/2,\kappa[$, 
but could in principle exist if $\kappa < \sqrt{2}-1$. \footnote{From Section~\ref{sec:skewed_alg} we know that $\kappa \leq \sqrt{5}/3 \simeq 0.745$, 
so in any case we have $\kappa/2<\sqrt{2}-1$.}

Likewise, our hardness results
 do not rule out the existence of a polynomial-time algorithm 
distinguishing between matrices with a {very small}
RIP parameter and matrices with a RIP parameter larger than say 0.1.
Here {\em very small} means as in Theorem~\ref{hard_th1} that the
RIP parameter goes to 0 as $n \rightarrow +\infty$. 
If convergence to 0 is not too fast then we could still
 use such a weak distinguishing algorithm for certifying most random matrices. The following table gives a summary of our hardness results.
\begin{table}[ht] 
\small
	\begin{center}
    \begin{tabular}{ | c | p{5cm} | c | p{3cm} | c |}
	\hline
    \multicolumn{5}{|c|}{} \\
	\multicolumn{5}{|c|}{\underline{\textbf{
Hardness Results}}} \\

    \multicolumn{5}{|c|}{} \\
	\hline\hline
	\multirow{2}{*}{} & & & & \\
	$k$  & \text{$(k,\delta_1)$ vs. $(k, \delta_2)$ - hard }  & \textbf{Result} & \textbf{Assumptions} & \textbf{Dimensions ($n\times N$)} \\
	\hline
	\hline
	\multirow{2}{*}{} & & & & \\ 
	     $n^{3/4}$ & $\delta_1 = c_1n^{-1/12}$, $\delta_2 = \lambda c_1n^{-1/12}$ $\forall \lambda > 1$  &  Theorem~\ref{hard_th1} & Hypothesis~\ref{hyp:one} & $n\times n$ \\
	\hline
	\multirow{2}{*}{} & & & & \\ 
	     $n^{1/3}$ & $\delta_1 = c_1 n^{-4/15}$, $\delta_2 = \lambda c_1n^{-4/15}$ $\forall \lambda > 1$  &  Theorem~\ref{hard_th1} & Hypothesis~\ref{hyp:one} & $n\times n$ \\
	\hline
\multirow{2}{*}{} & & & & \\
    $\Theta(n)$ & $\delta_1 =\kappa / 2$, $\delta_2= \kappa$ &  Theorem~\ref{hard_th2}  & Hypothesis~\ref{hyp:two} & $n\times n $\\
\hline
\multirow{2}{*}{} & & & & \\
    $n^{3/4}$ & $\delta_1 = c_2 n^{-1/12}$, $\delta_2 = \lambda c_2 n^{-1/12}$ $\forall \lambda >1$  &   Theorem~\ref{hard_rec1}  & Hypothesis~\ref{hyp:one} & $n\times n^{1+\epsilon_0}$ for $\epsilon_0>0$ \\
\hline
\multirow{2}{*}{} & & & & \\
   $\Theta (n)$ & $\delta_1 =\kappa / 2$, $\delta_2= \kappa$ &   Theorem~\ref{hard_rec2}  &Hypothesis~\ref{hyp:two:random} & $n\times 50 n$ \\
\hline
\end{tabular}
\caption{We say that a matrix $\Phi$ has the $(k,\delta)$-RIP iff $(1-\delta) \leq \norm{\Phi x}^2 \leq (1+\delta)$ for every $k$-sparse unit vector $x$. By $(k,\delta_1)$ vs. $(k,\delta_2)$-hard we abbreviate the following: no polynomial time algorithm
can distinguish matrices $\Phi$ that satisfy the $(k,\delta_1)$-RIP 
from  matrices that do not satisfy the $(k,\delta_2)$-RIP. The constants $c_1,c_2$ depend on $\lambda$.}
\end{center}
\end{table}
\section{Acknowledgements}
We thank Eden Chlamt\'{a}\v{c} for many useful discussions. In particular, Eden suggested to study the complexity of the RIP for square matrices, and helped refute an earlier version of Hypothesis~\ref{hyp:two}.
%
%
%

\begin{thebibliography}{10}

\bibitem{Alekh03}
M.~Alekhnovich.
\newblock {More on Average Case vs Approximation Complexity}.
\newblock In {\em Proceedings of the Symposium on Foundations of Computer
  Science (FOCS)}, pages 298--307, 2003.

\bibitem{AGHP92}
N.~Alon, O.~Goldreich, J.~Hastad, and R.~Peralta.
\newblock {Simple Construction of Almost $k$-wise Independent Random
  Variables}.
\newblock In {\em Proceedings of the Symposium on Foundations of Computer
  Science (FOCS)}, pages 544--553, 1990.

\bibitem{AKS98}
N.~Alon, M.~Krivelevich, and B.~Sudakov.
\newblock {Finding a Large Hidden Clique in a Random Graph}.
\newblock {\em Random Struct. Algorithms}, 13:457--466, 1998.

\bibitem{AKV02}
N.~Alon, M.~Krivelevich, and V.~Vu.
\newblock {On the Concentration of Eigenvalues of Random Symmetric Matrices}.
\newblock {\em Israel Journal of Mathematics}, 131:259--267, 2002.

\bibitem{BCCFV10}
A.~Bhaskara, M.~Charikar, E.~Chlamt\'{a}\v{c}, U.~Feige, and A.~Vijayaraghavan.
\newblock {Detecting High log-densities: an
  $\mathcal{O}(n^{1/4})$-approximation for densest $k$-subgraph}.
\newblock In {\em Proceedings of the Symposium on Theory of Computing (STOC)},
  pages 201--210, 2010.

\bibitem{BDFKK}
J.~Bourgain, S.~J. Dilworth, K.~Ford, S.~Konyagin, and D.~Kutzarova.
\newblock {Explicit Constructions of {RIP} Matrices and Related Problems}.
\newblock Available at~\href{http://arxiv.org/abs/1008.4535}{arxiv:1008.4535},
  August 2010.

\bibitem{BDFKKb}
J.~Bourgain, S.~J. Dilworth, K.~Ford, S.~Konyagin, and D.~Kutzarova.
\newblock {Breaking the $k^2$ Barrier for Explicit RIP Matrices}.
\newblock In {\em Proceedings of the Symposium on Theory of Computing (STOC)},
  2011.
\newblock Conference version of~\cite{BDFKK}.

\bibitem{Candes}
E.~J. Cand\`{e}s.
\newblock {The Restricted Isometry Property and its Implications for Compressed
  Sensing}.
\newblock {\em Comptes Rendus Mathematique}, 346(9-10):589 -- 592, 2008.

\bibitem{CRT06}
E.~J. Cand\`{e}s, J.~K. Romberg, and T.~Tao.
\newblock {Stable Signal Recovery from Incomplete and Inaccurate Measurements}.
\newblock {\em Communications on Pure and Applied Mathematics},
  59(8):1207--1223, 2006.

\bibitem{CandesTao}
E.~J. Cand\`{e}s and T.~Tao.
\newblock {Decoding by Linear Programming}.
\newblock {\em Information Theory, IEEE Transactions on}, 51(12):4203 -- 4215,
  2005.

\bibitem{ABG08}
A.~d'Aspremont, F.~Bach, and L.~El Ghaoui.
\newblock {Optimal Solutions for Sparse Principal Component Analysis}.
\newblock {\em J. Mach. Learn. Res.}, 9:1269--1294, 2008.

\bibitem{AspreG08}
A.~d'Aspremont and L.~El Ghaoui.
\newblock {Testing the Nullspace Property using Semidefinite Programming}.
\newblock Available at~\href{http://arxiv.org/abs/0807.3520}{arxiv:0807.3520},
  July 2008.

\bibitem{Devore07}
R.~A. DeVore.
\newblock {Deterministic Constructions of Compressed Sensing Matrices}.
\newblock {\em J. Complex.}, 23:918--925, 2007.

\bibitem{Feige02}
U.~Feige.
\newblock {Relations between Average Case Complexity and Approximation
  Complexity}.
\newblock In {\em Proceedings of the Symposium on Theory of Computing (STOC)},
  pages 534--543, 2002.

\bibitem{FK81}
Z.~F\"{u}redi and J.~Koml\'{o}s.
\newblock {The Eigenvalues of Random Symmetric Matrices}.
\newblock {\em Combinatorica}, 1:233--241, 1981.

\bibitem{IouNem08}
A.~Juditsky and A.~Nemirovski.
\newblock {On Verifiable Sufficient Conditions for Sparse Signal Recovery via
  $\ell_1$ Minimization}.
\newblock Available at~\href{http://arxiv.org/abs/0809.2650}{arxiv:0809.2650},
  September 2008.

\bibitem{Kashin75}
B.~S. Kashin.
\newblock {The Diameters of Octahedra}.
\newblock {\em Uspekhi Mat. Nauk}, 30:251--252, 1975.

\bibitem{Khot06}
S.~Khot.
\newblock {Ruling out PTAS for Graph min-bisection, Dense $k$-subgraph, and
  Bipartite Clique}.
\newblock {\em SIAM J. Comput.}, 36:1025--1071, 2006.

\bibitem{Ledoux96ontalagrand}
M.~Ledoux.
\newblock {On Talagrand's Deviation Inequalities For Product Measures}.
\newblock {\em ESAIM Probab. Stat.}, 1:63--87, 1996.

\bibitem{book:Ledoux}
M.~Ledoux.
\newblock {\em {The Concentration of Measure Phenomenon}}, volume~89 of {\em
  Mathematical Surveys and Monographs}.
\newblock American Mathematical Society, 2001.

\bibitem{Vershynin}
R.~Vershynin.
\newblock {Introduction to the Non-asymptotic Analysis of Random Matrices}.
\newblock In: Compressed Sensing: Theory and Applications, eds. Y.~Eldar and
  G.~Kutyniok. Cambridge University Press, to appear. Available
  at~\href{http://www-personal.umich.edu/~romanv/papers/non-asymptotic-rmt-pla%
in.pdf}{http://www-personal.umich.edu/~romanv/papers/non-asymptotic-rmt-plain.%
pdf}, 2010.

\bibitem{random_matrix:norm:Vu}
V.~Vu.
\newblock {Spectral Norm of Random Matrices}.
\newblock {\em Combinatorica}, 27:721--736, 2007.

\end{thebibliography}

\section*{Appendix}
\paragraph{Proof of Theorem~\ref{norm}}
We recall a bound on the expectation of a symmetric Bernoulli matrix~\cite[Eqn.~$1$]{random_matrix:norm:Vu}.
\begin{lemma}\label{lem:bound_exp}
Let $A$ be a symmetric matrix whose off-diagonal entries are i.i.d. from $\{\pm 1\}$ with equal probability and having zero diagonal entries. There exist positive constants $\delta,c_1$ and $c_2$ so that for $k=c_1 n^\delta$
\begin{equation}
\EE \trace{A^k} \leq c_2 n (2\sqrt{n})^k.
\end{equation} 
\end{lemma}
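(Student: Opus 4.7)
The plan is to apply the moment method in its now-standard Füredi--Komlós form. Expand
\[
\EE\trace{A^k} = \sum_{i_1,\dots,i_k=1}^{n} \EE[a_{i_1 i_2} a_{i_2 i_3} \cdots a_{i_{k-1} i_k} a_{i_k i_1}],
\]
so the sum runs over closed walks $W=(i_1,i_2,\dots,i_k,i_1)$ of length $k$ in the complete graph on $\{1,\dots,n\}$. Since the off-diagonal entries are independent, symmetric, with $\EE[a_{ij}^{2r}] = 1$ and $\EE[a_{ij}^{2r+1}]=0$, and since diagonal entries vanish, only those walks contribute whose edge-multiset covers every traversed edge an even number of times and which visit no loops.

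The next step is to organize the contributing walks by isomorphism type. First I would collapse each walk to its \emph{shape}: the sequence of first-occurrences of vertices, yielding a closed walk on an abstract vertex set $\{v_1,\dots,v_s\}$ where $s$ is the number of distinct vertices visited. The number of walks with a given shape of support size $s$ is at most $n(n-1)\cdots(n-s+1) \le n^s$. Because each traversed edge must be covered at least twice, the number of distinct edges is at most $k/2$, so $s \le k/2+1$. The extremal case $s=k/2+1$ is exactly that of \emph{tree walks}: the underlying multigraph is a tree on $s$ vertices and each of its $k/2$ edges is traversed exactly twice. Classical enumeration shows that the number of such shapes is the Catalan number $C_{k/2} \le 4^{k/2} = 2^{k}$. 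Hence the total contribution of tree walks is at most
\[
2^{k} \cdot n^{k/2+1} \;=\; n\,(2\sqrt{n})^{k},
\]
which is exactly the target bound.

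It remains to show that the contribution of all \emph{non-tree} walks (those with $s \le k/2$) is absorbed into the constant $c_2$. Here I would separate walks by the deficit $d = k/2 + 1 - s \ge 1$ (the number of ``excess'' edge-repetitions or cycles). A shape with deficit $d$ contributes at most $n^{s} = n^{k/2+1-d}$ from the vertex labelling, while the number of such shapes can be bounded combinatorially by something of the form $k^{O(d)} \cdot 2^{k}$ using the standard Füredi--Komlós encoding of ``bad'' steps (the extra information needed to reconstruct a walk with $d$ anomalies). Summing over $d \ge 1$ then yields
\[
\sum_{d \ge 1} k^{O(d)} \cdot 2^{k}\cdot n^{k/2+1-d} \;\le\; C\, n\,(2\sqrt{n})^{k} \sum_{d\ge 1}\Bigl(\frac{k^{O(1)}}{n}\Bigr)^{d},
\]
and the geometric series converges as soon as $k \le c_1 n^{\delta}$ for a sufficiently small $\delta>0$ (any $\delta<1/(2\cdot O(1))$, so in particular some absolute $\delta>0$, e.g. $\delta=1/6$ as in Füredi--Komlós).

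The main obstacle is the combinatorial bookkeeping in the last step: producing the bound $k^{O(d)}\cdot 2^k$ on the number of shapes of deficit $d$ requires the Füredi--Komlós encoding of walks by Dyck-like words with marked ``irregular'' steps, and one must check that $d$ irregularities add no more than $k^{O(d)}$ to the count so that the series converges. Once that encoding is in hand, the bound on $\delta$ falls out, and the whole non-tree contribution is subsumed into the constant $c_2$, completing the lemma.
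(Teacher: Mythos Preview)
Your sketch is the standard F\"uredi--Koml\'os moment-method argument and is correct in outline: the closed-walk expansion, the parity constraint on edge multiplicities, the Catalan count for tree walks giving the leading term $n(2\sqrt{n})^k$, and the deficit parameter $d$ controlling non-tree walks via an encoding that costs $k^{O(d)}$ per anomaly are all right. The geometric-series bound then forces $k \le c_1 n^{\delta}$ for some absolute $\delta>0$, exactly as claimed.

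However, you should know that the paper does \emph{not} prove this lemma at all: it simply quotes the bound as Equation~(1) of Vu's paper~\cite{random_matrix:norm:Vu} (which in turn builds on F\"uredi--Koml\'os~\cite{FK81}). So there is no ``paper's own proof'' to compare against; your proposal is a self-contained sketch of the cited result rather than a reproduction of anything in the present paper. What you have written is essentially the argument that sits behind the citation, and the only caveat is the one you already flag: the encoding bound $\#\{\text{shapes of deficit }d\} \le k^{O(d)} 2^k$ is genuine combinatorial work, and a complete write-up would need to spell it out (or cite F\"uredi--Koml\'os for it) rather than leave it as ``the main obstacle.''
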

By definition of the operator norm, we know that $\norm{A} := \sup_{\norm{x}=1}{\norm{Ax}}$ for any matrix $A$. Decompose any unit vector as $x = \alpha \mathbf{1}/\sqrt{n} + \beta y$, where $\mathbf{1}$ is the all-ones vector, $y$ is the unit vector with $y \bot \mathbf{1}$ and $\alpha^2+\beta^2 =1 $. 
\begin{eqnarray}
	\EE \norm{\widehat{A}} &   =  & \EE \sup_{\norm{x}=1} \norm{\widehat{A}x}  
						   \   =  \ \EE  \sup_{\norm{y}=1,\ y\bot \mathbf{1}} \sup_{\alpha^2 + \beta^2 = 1} \norm{\widehat{A}(\alpha \frac{\mathbf{1}}{\sqrt{n}} + \beta y)}\nonumber  \\
						   & \leq & \EE  \sup_{\norm{y}=1,\ y\bot \mathbf{1}} \sup_{\alpha^2 + \beta^2 = 1} |\alpha|\norm{\widehat{A}\frac{\mathbf{1}}{\sqrt{n}}} + |\beta|\norm{\widehat{A}y}  
						   \ \leq \ \EE  \sup_{\norm{y}=1,\ y\bot \mathbf{1}} \sqrt{\norm{\widehat{A}\frac{\mathbf{1}}{\sqrt{n}}}^2 + \norm{\widehat{A}y}^2}\nonumber \\
						   &   =  & \EE  \sqrt{ \norm{\widehat{A}\frac{\mathbf{1}}{\sqrt{n}}}^2 + \sup_{\norm{y}=1,\ y\bot \mathbf{1}}\norm{\widehat{A}y}^2} 
						   \  \leq  \ \sqrt{\EE  \norm{\widehat{A}\frac{\mathbf{1}}{\sqrt{n}}}^2 + \sup_{\norm{y}=1,\ y\bot \mathbf{1}}\norm{\widehat{A}y}^2}\nonumber \\
						   &  \leq  & \sqrt{ \EE  \norm{\widehat{A}\frac{\mathbf{1}}{\sqrt{n}} }^2 + \EE \norm{A / 2}^2}\label{ineq:two_terms}
\end{eqnarray}
where we used the triangle's inequality, Cauchy-Schwarz's inequality, Jensen's inequality and the fact that $\J y =0$ for every $y\bot \mathbf{1}$. Now we bound each 
of the two terms on the right-hand side of~\eqref{ineq:two_terms} 
separately; the first term is straight-forward to bound. Indeed,
\begin{eqnarray*}
	\norm{\widehat{A}\frac{\mathbf{1}}{\sqrt{n}} }^2  &  =  &   \norm{A\frac{\mathbf{1}}{2\sqrt{n}} + a \mathbf{1} }^2 \  = \   \norm{A\frac{\mathbf{1}}{2\sqrt{n} } }^2 + 2\ip{ A\frac{\mathbf{1} }{2 \sqrt{n}} }{a \mathbf{1}} + a^2 n \\
							  &  =  &   \norm{A\frac{\mathbf{1}}{2\sqrt{n} } }^2 + (a/\sqrt{n})\sum_{i,j}A_{i,j} + a^2 n \  = \ \frac1{4n}\sum_{i}\left(\sum_{j} A_{ij}\right)^2  + (a/\sqrt{n})\sum_{i,j}A_{i,j} + a^2 n \\
							&  =  &   \frac1{4n}\sum_{i}\left(\sum_{j} A_{ij}^2 + 2\sum_{l<k} A_{il} A_{ik}   \right)  + (a/\sqrt{n})\sum_{i,j}A_{i,j} + a^2 n \\
\end{eqnarray*}
Now, take expectation and recall that $\EE A_{ij} = 0$ 
and $\EE A_{ij}^2 = 1$.
Since the entries of $A$ are independent, we obtain the inequality:
\begin{equation}\label{eqn:first_term}
\EE \norm{\widehat{A}\frac{\mathbf{1}}{\sqrt{n}} }^2 \leq  (a^2 + 1/4) n .\
\end{equation}
To 
bound the second term, we use Wigner's trace method (Lemma~\ref{lem:bound_exp}). For $k$ as in Lemma~\ref{lem:bound_exp},
\begin{eqnarray*}
 \EE{ \norm{A}^2} &  =  & \EE{\sqrt[k]{\norm{ A^{2k}}} } \ \leq \   \EE{\sqrt[k]{\trace{ A^{2k}}} } \ \leq \  \sqrt[k]{\EE{\trace{ A^{2k}}} } \ \leq\ 4n 2^{\OO (\log (n) /n^\delta ) }.
\end{eqnarray*}
where we used that $\norm{A^{2k}} \leq \trace{ A^{2k}}$, Jensen's inequality and Lemma~\ref{lem:bound_exp}.
The above discussion implies that
\begin{equation}\label{eqn:second_term}	
\EE{ \norm{A/2}^2}  \leq n (1+o(1)).
\end{equation}
Combining 
\eqref{ineq:two_terms},\eqref{eqn:first_term}
and 
\eqref{eqn:second_term}, it follows that
\[ \EE \norm{\widehat{A}}  \leq \sqrt{n} \sqrt{a^2 +5/4 + o(1)}. \]
%
%
%

This completes the proof of the first part of Theorem~\ref{norm}.
Now, we describe a tail bound for a convex function of random signs. This result was established by Ledoux~\cite{Ledoux96ontalagrand}; see also~\cite[\S~5.2]{book:Ledoux} for a discussion of concentration in product spaces.
\begin{theorem}
Suppose $f: \RR^m \to \RR $ is a convex function that satisfies the Lipschitz bound
\[|f(x)-f(y)| \leq L \norm{x-y}\quad \text{for all }x,y\in\RR^m.\]
Let $r\in{\{\pm 1\}^m}$ be a random sign vector. For all $\epsilon\geq 0$,
\[\Pr \left[ f(r) \geq \EE f(r) + L\epsilon \right] \leq \exp(-\epsilon^2 /8).\]
\end{theorem}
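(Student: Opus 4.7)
\medskip
\noindent\textbf{Proof plan.} The plan is to derive this from Talagrand's convex distance inequality together with a convexity argument that was the main contribution of Ledoux~\cite{Ledoux96ontalagrand}. Throughout let $r$ be uniform on $\{\pm 1\}^m$. For a subset $A\subseteq\{\pm 1\}^m$ I will use the convex distance
\[
d_T(x,A) \;=\; \sup_{\alpha\in\RR_+^m,\ \norm{\alpha}_2\le 1}\ \inf_{y\in A}\ \sum_{i=1}^m \alpha_i\,\mathbf{1}_{x_i\ne y_i},
\]
for which Talagrand's theorem gives the black-box subgaussian bound
\[
\EE\exp\!\bigl(d_T(r,A)^2/4\bigr) \;\le\; \frac{1}{\Pr(r\in A)}.
\]
I will quote this inequality; a self-contained proof proceeds by induction on the dimension $m$ and is standard (see Ledoux's monograph~\cite{book:Ledoux}).

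\medskip
\noindent The next step converts the geometric statement about $d_T$ into an analytic statement about $f$. Fix $x\in\{\pm 1\}^m$ and apply the minimax theorem to the definition of $d_T$: if $d_T(x,A)\le d$, there is a probability measure $\mu$ on $A$ such that, writing $p_i=\mu(\{y:y_i\ne x_i\})$, we have $\sum_i\alpha_i p_i\le d$ for every unit $\alpha\in\RR_+^m$; equivalently $\norm{p}_2\le d$. Let $\bar y=\int y\,d\mu(y)\in\RR^m$. A direct computation gives $x_i-\bar y_i=2p_ix_i$, so $\norm{x-\bar y}_2=2\norm{p}_2\le 2d$. Because $f$ is convex on $\RR^m$, Jensen's inequality yields $f(\bar y)\le\int f\,d\mu\le \sup_{y\in A} f(y)$, and the Lipschitz bound then gives
\[
f(x) \;\le\; f(\bar y)+L\norm{x-\bar y}_2 \;\le\; \sup_{A} f \;+\; 2L\,d_T(x,A).
\]

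\medskip
\noindent Now specialize to $A=\{y:f(y)\le M\}$ where $M$ is the median of $f(r)$, so $\Pr(r\in A)\ge 1/2$. Combining the two displays above, $f(r)-M\le 2L\,d_T(r,A)$ pointwise, and Talagrand's inequality gives
\[
\Pr\bigl[f(r)\ge M+L\epsilon\bigr] \;\le\; \Pr\bigl[d_T(r,A)\ge \epsilon/2\bigr] \;\le\; 2\exp(-\epsilon^2/16).
\]
To pass from median to mean one integrates the tail estimate (and the analogous lower-tail bound obtained by applying the same argument to the set $\{f\ge M\}$, noting that the upper bound $f(x)\le M+2L\,d_T(x,\{f\le M\})$ uses only convexity and Lipschitz continuity, while the opposite inequality follows by replacing $f$ with $-(\,\cdot\,)$ on the level set on the other side, or by a direct truncation argument). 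This bounds $|\EE f(r)-M|$ by a universal constant times $L$, and translates the above tail bound into the claimed concentration around the mean.

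\medskip
\noindent\textbf{Main obstacle.} The sketch above yields the correct form of the inequality but with a suboptimal constant ($1/16$ in place of $1/8$). Obtaining the precise constant $1/8$ stated in the theorem requires either a sharper optimization inside Talagrand's induction—exploiting that the Lipschitz constant is with respect to the Euclidean metric and that on $\{\pm 1\}^m$ a flip of coordinate $i$ moves $x$ by exactly $2$ in the $i$-th direction—or an alternative entropy/log-Sobolev route as in Ledoux~\cite{Ledoux96ontalagrand,book:Ledoux}. This constant bookkeeping is where I would expect to spend the most effort; the structural steps (Talagrand's inequality, the convex-barycenter argument, and the median-to-mean conversion) are conceptually robust.
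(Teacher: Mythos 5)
The paper does not prove this theorem; it quotes it from Ledoux~\cite{Ledoux96ontalagrand} (see also~\cite[\S 5.2]{book:Ledoux}), so there is no internal proof to compare against. Your sketch takes the well-known Talagrand convex-distance route, and the core steps of that route are correct: the minimax characterization $d_T(x,A)=\inf_{\mu\in\mathcal P(A)}\norm{p(\mu)}_2$, the barycentric identity $x_i-\bar y_i=2p_i x_i$ (whence $\norm{x-\bar y}_2=2\norm{p}_2$), and the resulting inequality $f(x)\le \sup_A f+2L\,d_T(x,A)$ obtained by combining Jensen with the Lipschitz bound.

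That said, there is a genuine gap, and it is not mere bookkeeping. Two factors conspire against the constant $1/8$. First, Talagrand's inequality for a set of measure $\ge 1/2$ gives $\Pr[d_T(r,A)\ge t]\le 2e^{-t^2/4}$, and the factor $2$ in $\norm{x-\bar y}_2=2\norm{p}_2$ is forced by the side length of the cube $\{\pm1\}^m$: flipping one coordinate moves a point by Euclidean distance $2$. Composing them yields $\Pr[f(r)\ge M+L\epsilon]\le 2e^{-\epsilon^2/16}$ about the median $M$, and neither the exponent $1/16$ nor the prefactor $2$ can be sharpened inside this framework without changing the framework. Second, the median-to-mean conversion you invoke is shakier than you suggest: the convex-barycenter argument controls only the \emph{upper} tail $f(x)\le\sup_A f+2L\,d_T(x,A)$, because replacing $f$ by $-f$ would require $-f$ to be convex. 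So you do not automatically get the lower-tail estimate needed to bound $M-\EE f$, and even granting it, centering at the mean would at best preserve (and typically worsen) the constant $1/16$; it cannot improve it to $1/8$. The clean constant $e^{-\epsilon^2/8}$ centered at the \emph{mean} is what Ledoux actually obtains, and he obtains it by an entropy / modified log-Sobolev argument for convex Lipschitz functions on product spaces (equivalently, via Maurey's infimum-convolution property $(\tau)$), not by post-processing Talagrand's set-measure inequality. The final paragraph of your sketch should therefore be read as pointing to a genuinely different proof, not as an optimization within the one you wrote.
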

Let $f:\RR^{n(n-1)/2} \to \RR$. Consider the bijection between the set of symmetric matrices of size $n$ having zero diagonal and the set of vectors in $\RR^{n(n-1)/2}$ via the natural vectorization function $\text{vec}:\RR^{n\times n} \to \RR^{n(n-1)/2}$, i.e., vectorize the strict lower triangular part of its argument matrix. 
For every vector $v \in \RR^{n(n-1)/2}$,
define $f( v) := \norm{\frac1{2} \text{vec}^{-1}(v) + \frac{a}{\sqrt{n}} \J} $. By definition, $f$ is convex. Let's compute its Lipschitz constant; let $A,B$ be symmetric matrices of size $n$ with zero diagonals, then  $ \abs{f(\text{vec}(A)) -f(\text{vec}(B))} = \abs{\norm{\frac1{2} A + \frac{a}{\sqrt{n}} \J} - \norm{\frac1{2} B + \frac{a}{\sqrt{n}} \J} } \leq \norm{\frac1{2} (A-B)}\leq 1/2 \frobnorm{A-B} \leq \norm{ \text{vec}(A) - \text{vec}(B)}$ where $\frobnorm{A}^2 = \sum_{i,j}A_{ij}^2 $ (the inequality $\norm{A} \leq \frobnorm{A}$ always holds). Hence the Lipschitz constant of $f$ is one, which concludes the 
proof of the
``moreover'' part of Theorem~\ref{norm}.

\end{document}